\theoremstyle{plain}
\newlength{\abstractwidth}
\renewcommand{\title}[1]{\vbox{\center\bf{\Large{#1}}}\vspace{5mm}}
\renewcommand{\author}[1]{\vbox{\center#1}\vspace{5mm}}
\newcommand{\address}[1]{\vbox{\center\em#1}}
\newtheorem{lemma}{Lemma}[section]
\begin{document}

\begin{titlepage}
\begin{center}
\hfill \\
\hfill \\
\vskip 1cm

\title{Butterfly velocity and  bulk causal structure} 
\author{Xiao-Liang Qi$^1$, Zhao Yang$^1$}
\address{
$^1$Department of Physics, Stanford University, Stanford, CA 94305, USA}

\date{\today}
\end{center}
  
\begin{abstract}
The butterfly velocity was recently proposed as a characteristic velocity of chaos propagation in a local system. Compared to the Lieb-Robinson velocity that bounds the propagation speed of all perturbations, the butterfly velocity, studied in thermal ensembles, is an "effective" Lieb-Robinson velocity for a subspace of the Hilbert space defined by the microcanonical ensemble. In this paper, we generalize the concept of butterfly velocity beyond the thermal case to a large class of other subspaces. Based on holographic duality, we consider the code subspace of low energy excitations on a classical background geometry. Using local reconstruction of bulk operators, we prove a general relation between the boundary butterfly velocities (of different operators) and the bulk causal structure. Our result has implications in both directions of the bulk-boundary correspondence. Starting from a boundary theory with a given Lieb-Robinson velocity, our result determines an upper bound of the bulk light cone starting from a given point. Starting from a bulk space-time geometry, the butterfly velocity can be explicitly calculated for all operators that are the local reconstructions of bulk local operators. If the bulk geometry satisfies Einstein equation and the null energy condition, for rotation symmetric geometries we prove that infrared operators always have a slower butterfly velocity that the ultraviolet one. For asymptotic AdS geometries, this also implies that the butterfly velocities of all operators are upper bounded by the speed of light. We further prove that the butterfly velocity is equal to the speed of light if the causal wedge of the boundary region coincides with its entanglement wedge. Finally, we discuss the implication of our result to geometries that are not asymptotically AdS, and in particular, obtain constraints that must be satisfied by a dual theory of flat space gravity.

%From
%  the boundary to bulk point of view, we show that the Lieb-Robinson
%  velocity on the boundary (boundary locality) determines an upper
%  bound of the speed of light in the bulk (bulk locality) in a
%  holographic mapping with error correction properties, such as the
%  random tensor network models. From the bulk to boundary point of
%  view, we derive the properties of boundary butterfly velocities from
%  a known bulk space-time geometry. Specifically, by assuming the
%  geometry to satisfy the Einstein equation and the null energy
%  condition, we prove one monotonicity of the butterfly velocities of
%  boundary operators from UV to IR. 

\end{abstract}
\end{titlepage}

\tableofcontents

\baselineskip=17.63pt

\section{Introduction}
% [\ZY{I move this paragraph to the beginning}.]
The butterfly velocity\cite{shenker2013black, roberts2014localized} is
generally a {\it state dependent} measure of the quantum many-body
dynamics that quantifies the propagation velocity of the causal
influence for a local perturbation {\it when the influence is probed
  in a subspace of many-body states}. For a relativistic system, the
velocity of causal influence is the speed of light if we study a
generic perturbation acting on arbitrary quantum states. However, the
butterfly velocity at finite temperature is generically slower than
the speed of light because we are probing the causal influence only in
states with a fixed energy density. The butterfly velocity can be
measured by the effective size of commutator between local
operators. In large $N$ theories with a semi-classical holographic
dual, the commutator is believed to behaves as
$ \langle [W(x,t),V(0,0)]^2\rangle_\beta = \frac{C}{N^2}
e^{\lambda_L(t-x/v_B)} +O(N^{-4})$, where $W,V$ are the generic
operators, $C$ is a constant, $\lambda_L$ is the Lyapunov exponent,
and $v_B$ is the butterfly velocity.  $\langle\rangle_\beta$
represents the thermal average at temperature $1/\beta$. Moreover,
according to the anti-de Sitter/conformal field theory (AdS/CFT)
correspondence \cite{maldacena1999large, witten1998anti,
  gubser1998gauge}, the large $N$ gauge theory at finite temperature
is dual to the AdS black hole geometry, and applying boundary
operators onto the thermal ensemble corresponds to shooting shock
waves into the black hole in the bulk \cite{shenker2013black,
  roberts2014localized}. Thus the butterfly velocity can be calculated
holographically in the bulk by evaluating the back-reaction of the
shock wave geometry\cite{shenker2013black, roberts2014localized,
  roberts2016lieb}. Another independent calculation
\cite{mezei2016entanglement} of the butterfly velocity was
accomplished by studying the expansion rate of the extremal surface
near the black hole horizon. This result is consistent with previous
shockwave calculations.  However, all the discussions on butterfly
  velocities so far are about the thermal ensembles. Also, as will be
  seen from the results of the present paper, the butterfly velocities
  calculated in these previous works only correspond to those bulk
  operators close to the horizon.

Motivated by these works, our paper aims to generalize the concept of
butterfly velocity beyond the thermal ensemble to more generic
subspaces of states. We observe that in the same code-subspace
different operators generically have different velocities, and we
establish a concrete relation between the boundary butterfly velocity
and the bulk causal structure in holographic systems.  The connection
between these two ends is the quantum error correction conditions,
which have been observed both within the AdS/CFT
correspondence\cite{almheiri2014bulk,dong2016reconstruction,harlow2016ryu,dong2016bulk}
and in the tensor network
models\cite{pastawski2015holographic,yang2016bidirectional,hayden2016holographic,qi2017holographic,donnelly2016living}. % \XLQ{[consider
  % to add Don Marolf's paper on generalizing pentagon code.]}
A natural generalization of the thermal ensemble is the code subspace
, {\it i.e.}, the subspace of small fluctuations around a classical
geometry. The code subspace for an AdS black hole includes states with
a fixed energy density from a (microcanonical) thermal ensemble of the
boundary. %. that contains all the thermal states within a narrow energy window.
Holographically, these states are dual to the same classical
black-hole geometry.  However, they are different black-hole
micro-states or have different states in the effective field theory
living on top of the black hole geometry. We demonstrate that the
concept of the butterfly velocity can be generalized to arbitrary code
subspaces, and it is defined as the propagation velocity of
  certain operators measured by the states in the code subspace. For
large $N$ theories with a semi-classical bulk dual, using the
entanglement wedge reconstruction, we prove that the bulk causal
structure determines the butterfly velocities of operators on the
boundary and vise versa. Our generalization of the butterfly velocity
not only reproduces all the results in the thermal ensembles, but also
allows us to predict the boundary butterfly velocities in more general
boundary states with classical bulk dual geometries, such as
time-dependent geometries. {Our generalized butterfly velocity is
  operator-specific. For local operators in the bulk that are mapped
  to boundary operators in the same disk-shape region, we prove that
  butterfly velocity of the operator deeper in the bulk is slower for
  any geometry that satisfied Einstein's equation (EE) and the null
  energy condition (NEC). In addition, using tensor network
  construction, one can even construct spatial geometries that are not
  asymptotically AdS while still preserving error correction
  properties of the bulk-boundary correspondence.  Our discussion also
  applies for such geometries and impose constraints on the possible
  dual theories.} {In particular, we study the example of a flat
  geometry, and show that the boundary theory has to have a divergent
  butterfly velocity, which therefore has to be a nonlocal theory.}

%they are not supposed to be asymptotic AdS; and they can even be constructed using tensor networks. The only requirement is that the quantum error correction conditions are satisfied, such that the entanglement wedge reconstruction is available.  However, the more assumptions we make on the geometries, the more constrains we obtain for the butterfly velocities.

The remainder of the paper is organized as follows.  In
Sec.\ref{sec:def}, we give the precise definition of the butterfly
velocity, for a given operator and a given code subspace. % \XLQ{[I
% think mentioning large N is a bit confusing here. It should be ok to
% keep the statement general.]}
Then in Sec.\ref{sec:TN}, we analyze the indication of this definition
to the bulk theory. We show that in a holographic mapping with error
correction properties, such as the random tensor network
models\cite{hayden2016holographic,qi2017holographic}, the
Lieb-Robinson velocity of the boundary, which is the upper bound of
the butterfly velocities, determines an upper bound of the speed of
light in the bulk. Running this argument on different regions of the
boundary, we obtain a bulk region that encloses the casual future of a
bulk point. The physical interpretation of this result is that the
quantum error correction properties and local boundary dynamics imply
the local bulk dynamics in the code subspace. Specifically, in the
random tensor network models, this result means that the holographic
mapping defined by random tensor networks always maps local boundary
dynamics to local bulk dynamics in the code subspace.
  % not only establish a static correspondence between the bulk states
  % and the boundary states, but also map local boundary theories into
  % local bulk theories in the code subspace.
In sections \ref{sec:protocol}-\ref{sec:flat}, we focus on the
bulk-to-boundary direction and derive the properties of boundary
butterfly velocities from a known bulk space-time geometry. In
Sec.\ref{sec:protocol}, we show that in the holographic theory the
light cone at the bulk point $x$ determines the butterfly velocity of
the boundary operators that are local reconstruction of a local bulk
operator at $x$. In Sec.\ref{sec:monobv}, we assume the bulk geometry
to satisfy the Einstein equation (EE) and the null energy condition
(NEC) and conclude that, roughly speaking, the butterfly velocities of
the boundary operators decrease monotonically from the UV to IR. The
precise statement can be found in Sec.\ref{sec:monobv}. This
conclusion also implies that if the bulk geometry is asymptotically
AdS, the butterfly velocities of boundary operators are upper bounded
by the speed of the light.  Furthermore, in Sec.\ref{sec:causalwedge},
we prove that, if the bulk geometry is asymptotic AdS that satisfies
EE and NEC, and if the causal wedge of the boundary region $A$
coincides with its entanglement wedge, then the butterfly velocities
of the boundary operators that are supported on the whole boundary
region $A$ saturate the speed of the light. In Sec.\ref{example}, we
show some explicit calculations of the butterfly velocity for simple
bulk geometries, such as $d+1$-dimensional pure AdS space, the 3d
Banados, Teitelboim and Zanelli (BTZ) black hole
\cite{banados1992black}, % \XLQ{[give name of BTZ]}
and $d+1$ ($d>2$) AdS Schwarzschild black
holes. %Our method does not make explicit assumptions on the bulk geometries as long as the entanglement wedge reconstruction is satisfied. Thus it allows us to talk about the physics beyond the conventional AdS/CFT correspondence.
In Sec.\ref{sec:flat}, we go beyond AdS/CFT correspondence and study
the consequence of our results for a flat bulk geometry. Our results
lead to necessary conditions for a boundary theory to have a flat bulk
dual. Finally we conclude in Sec.\ref{sec:conclusion}.

\section{Definition of the butterfly velocity}\label{sec:def}
Previous discussions on the butterfly velocities are based on the
assumption that %in the large $N$ theory commutators evaluated in
the expectation value of commutator square of two operators in a thermal ensemble has the following behavior
$ \langle [W(x,t),V(0,0)]^2\rangle_\beta \propto
e^{\lambda_L(t-x/v_B)}$, in which case $v_B$ is called the butterfly
velocity. If we use a microcanonical ensemble instead of the canonical one, the quantity $\langle [W(x,t),V(0,0)]^2\rangle_\beta$ is the two norm of the commutator in the subspace of states in the microcanonical ensemble. %However, in order to discuss the butterfly velocity in the general settings, we must generalize the definition of butterfly velocities.
In the following, we will generalize the definition of butterfly velocities from thermal ensemble to more general subspaces. The generalized butterfly velocity describes the maximal information propagation velocity in a given subspace of the Hilbert space. 

We set the notations as follows. For a system with locality, such as a system defined in a Riemann manifold, or a discrete system defined on a graph, we define two regions $A$ and $B$, and denote $d(B,A)$ to be the distance between them, defined as the minimum of distance between two points $x\in A,~y\in B$. We denote $\{R|d(R,A)=D\}$ as all the
regions whose distance to $A$ is $D$. Besides, if an operator $O_A$
commutes with all operators (in Heisenberg picture) in the region $B$ at time $t$
sandwiched by two states $|\psi_i\rangle$, $|\psi_j\rangle$, then we abbreviate
this relation as
\begin{equation}
  \label{eq:3}
  \langle\psi_i| [O_A, B(t)]|\psi_j\rangle = 0
\end{equation}

Now we consider a system with holographic duality and start from the
boundary system. Given a boundary code subspace $\mathcal{H}_c$ (which
at this moment can be any subspace of the Hilbert space
$\mathcal{H}$), and a generic boundary operator $O_A$ with support on
region $A$, we define the butterfly velocity $v(O_A;\mathcal{H}_c)$ as
the minimal velocity such that in the $\Delta t \rightarrow 0$ limit,
$\forall ~ B\in \{R|d(R,A)= v(O_A;\mathcal{H}_c) \Delta t\}$,
$\forall ~|\psi_i\rangle, |\psi_j\rangle \in \mathcal{H}_c$,
\begin{equation}\label{def}
  \langle\psi_i| [O_A, B(\Delta t)]|\psi_j\rangle = 0
\end{equation}

The precise meaning of $\Delta t\rightarrow 0$ limit in the
$\epsilon, \delta$ language is elaborated in Appendix.\ref{SecDef}. In fact, it is more realistic to require the commutator to be small, controlled by a small parameter in the system, rather than exactly vanish. For large $N$ theories with a semiclassical dual, the small parameter is $\frac1{N}$, and the transition from zero and nonzero commutator at the ``butterfly cone" is sharp in the large $N$ limit. All our discussion below applies to such large $N$ limit.
%[In the
%  example of holographic models, the commutator is suppressed by $1/N$
%  for any separation. I guess you mean an extra power of $1/N$? Is
%  there a sharp transition at the butterfly cone? (If yes, this is a
%  significant difference from the SYK case.) ] } \ZY{[I am not sure
%  about the SYK, but the large $N$ behavior is as what we wrote in the
%  introduction
%  $ \langle [W(x,t),V(0,0)]^2\rangle_\beta = \frac{C}{N^2}
%  e^{\lambda_L(t-x/v_B)} +O(N^{-4})$. Thus there is a sharp transition
%  if $N$ is big.]}

Intuitively, the definition above means that at a small time $\Delta t$, operator $O_A$ evolve to a Heisenberg operator $O_A(\Delta t)$ that is ``effectively" supported in a region that is a slight expansion of $A$ by the distance $v(O_A;\mathcal{H}_c)\Delta t$. In other words, $O_A$ still commutes with all operators in the complement of this small expansion of $A$ after the Heisenberg evolution, {\it if the commutator operator is only measured in the code subspace $\mathcal{H}_c$}.  First of all, we
know that as long as $v(O_A;\mathcal{H}_c)$ is big enough, it is
obvious that Eq.\ref{def} is satisfied. For example, if
$v(O_A;\mathcal{H}_c) > c$, then obviously all the operators in the
boundary region $B\in \{R|d(R,A)=v(O_A;\mathcal{H}_c)\Delta t\}$ at
time $\Delta t$ commutes with $O_A$. Thus we minimize
$v(O_A;\mathcal{H}_c)$ to find the butterfly velocity.

{The definition above applies to any subspace $\mathcal{H}_c$ of the boundary Hilbert space, although what we will be interested in are the subspaces in which the bulk-to-boundary isometry with error correction properties is defined.\cite{almheiri2014bulk} It should be clarified that both operators that appear in the commutator ($O_A$ and generic operators $O_B(\Delta t)$ in $B(\Delta t)$) act on the whole Hilbert space, although the butterfly velocity only measures the norm of the commutator in the code subspace $\mathcal{H}_c$. Actually, for the code subspace we are interested in, all operators that only act in the code subspace are global on the boundary, so that any local operator like $O_A$ is necessarily coupling the code subspace with its complement in the full Hilbert space. }

%defined for some code subspace $\mathcal{H}_c$, $O_A$ is not necessary to be an operator that only lives in the code subspace, which means in general $O_A \neq P_cO_AP_c$. Also, we need to involve operators beyond the code subspace in $B$ to decide the butterfly velocity, which can be seen from the definition.

\section{Boundary time evolution and bulk causality}\label{sec:TN}

The relation between the butterfly velocity on the boundary and the
bulk causal structure is bidirectional. In this section, we look at
the direction from the boundary to the bulk. We will show that the boundary butterfly velocities and local reconstruction properties together give bounds on the causal structure of the bulk dual theory. 

\subsection{An overview of the local reconstruction property}

Our results apply generally to systems with a holographic operator correspondence between bulk and boundary with local reconstruction properties, which include the standard AdS/CFT systems \cite{almheiri2014bulk,dong2016reconstruction,harlow2016ryu,dong2016bulk} and holographic mappings defined by tensor networks \cite{pastawski2015holographic,yang2016bidirectional,hayden2016holographic}. For concreteness, we briefly overview the local reconstruction property and code subspace in the random tensor networks studied in Ref. \cite{hayden2016holographic}. The readers who are already familiar with local reconstruction can skip this subsection. 

A random tensor network with dangling legs in both bulk and boundary, as is illustrated in Fig. \ref{fig:RTN}, defines a linear map between the bulk and boundary Hilbert spaces. Each tensor can be considered as an operator $V_x$ with matrix element ${V_x}^a_{;\alpha\beta\gamma}$, which maps the bulk state $a$ to in-plane states $\alpha\beta\gamma$. Then the contraction of internal lines is equivalent to projecting the state of the two ends of a link into a maximally entangled state $|xy\rangle$. The holographic mapping from bulk to boundary is defined as an operator
\begin{eqnarray}
M=\prod_{\langle xy\rangle}\langle xy|\prod_x V_x
\end{eqnarray}
We denote the dimension of bulk vertex index $a$ as $D_b$ and that of the boundary and the internal indices as $D$. If there are $V$ vertices in the bulk and $V_B$ vertices on the boundary, the bulk Hilbert space dimension is $D_b^V$ and that of the boundary is $D^{V_B}$. Ref. \cite{hayden2016holographic} proves that the mapping $M$ is an isometry in the limit $D\rightarrow \infty$ with $D_b$ finite. In this case the bulk Hilbert space is mapped by $M$ to a subspace of the boundary Hilbert space, which is the code subspace $\mathcal{H}_c$ we are interested in here. In addition to the isometry property, Ref. \cite{hayden2016holographic} also proves the following local reconstruction property, similar to that in AdS/CFT. Each boundary region $A$ is associated with a minimal area surface $\gamma_A$ bounding it, and the region between $A$ and $\gamma_A$ is called the entanglement wedge of $A$, denoted by $E_A$.\footnote{More precisely, $E_A$ here corresponds to a spatial slice in the space-time entanglement wedge in the AdS/CFT case.} For an operator in the bulk $\phi$ with support in a bulk region $R$, if we choose a boundary region $A$ such that $R\subset E_A$, then there exists a boundary operator $O_A$ supported in region $A$, such that $O_A|\psi\rangle=M\phi M^\dagger |\psi\rangle$ for any state $|\psi\rangle\in\mathcal{H}_c$. Since a bulk region $R$ can be enclosed by entanglement wedges of different boundary regions, there are multiple boundary operators which reconstruct the same bulk operator in the code subspace. These different boundary operators are clearly different operators in the entire boundary Hilbert space, but their matrix elements are identical when acting on the code subspace states. 
\begin{figure}[!htb]
  \centering
  \includegraphics[width=0.5\textwidth]{./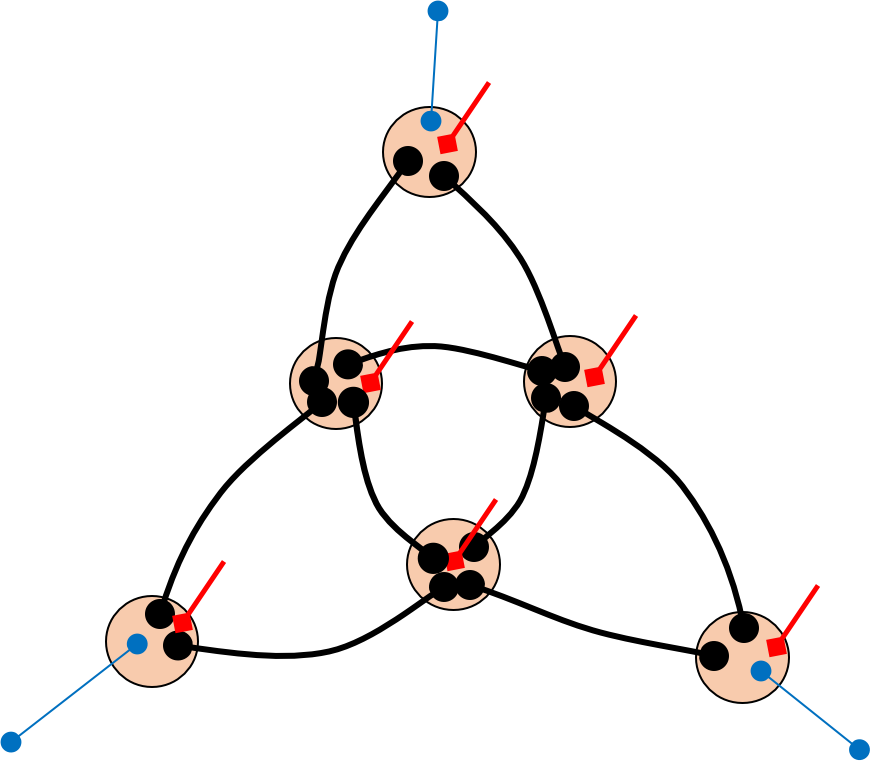}
  \caption{Illustration of a random tensor network, which defines a linear map between the bulk (red legs) and the boundary (blue legs). }
 \label{fig:RTN}
\end{figure}

The definition of code subspace and local reconstruction properties in AdS/CFT is the same as the tensor network case reviewed above, with the graph geometry replaced by a Riemann geometry. The bulk operators in the code subspace are quantum field theory operators with energy of order $1$ which do not change the background geometry to the leading order of Newton constant $G_N$. 

\subsection{Bound on bulk light cone}

In general, it is complicated to decide the butterfly velocity for all
operators. However, the butterfly velocity is upper bounded by
the speed of light $c$ for a Lorentz invariant theory, or the
Lieb-Robinson velocity $v_{LR}$ for a lattice system
\cite{lieb1972finite,nachtergaele2006propagation,hastings2010locality}. Using
this upper bound which we denote as $v_{LR}$, we can determine the
upper bound of the light-cone size in the
bulk. %An explicit application is the holographic code defined using the random tensor network \cite{hayden2016holographic}.

%The holographic code defined using the random tensor network is
%constructed as following. The contracted links and the un-contracted
%links on the boundary in the random tensor network are EPR pairs whose
%bond dimension are $D$. The ends of the links on each graph vertex are
%projected onto a random pure state. Besides on every vertex, there is
%an un-contracted link whose bond dimension is $D_b$. In the limit
%$D_b<< D$, the holographic code defines an isometry from the
%un-contracted bulk links to the un-contracted boundary links.  

For a bulk point $x$, as is shown in Fig. \ref{fig:envelope} (a), we consider a boundary region $A$ such that the entanglement wedge $E_A$ barely includes $x$. More precisely, $x\in E_A$ and $x$ is a distance $\epsilon$ away from the minimal surface $\gamma_A$, with $\epsilon\rightarrow 0^+$. According to local reconstruction, for any bulk local operator $\phi_x$ acting at site $x$, there is a corresponding boundary operator $O_A$ supported on $A$ which is the reconstruction of $\phi_x$. If the boundary butterfly velocity for arbitrary operator is upper bounded by $v_{LR}$, after time $\Delta t$ the operator $O_A$ will evolve to some operator that is supported in a slightly bigger region $A_{v_{LR}\Delta t}$, which is an expansion of $A$ defined by the following:
\begin{equation}
  A_{v_{LR}\Delta t} \equiv \{x\in \text{boundary}\left| \exists y\in A,  ~\text{s.t.}~  d(x,y) \leq  v_{LR}\Delta t \right \}
\label{eq:expansionofA}
\end{equation}
In other words, $A_{v_{LR}\Delta t}$ is the union of all balls of size $v_{LR}\Delta t$ (in the boundary) with center position in $A$. Since $O_A(\Delta t)$ is supported in $A_{v_{LR}\Delta t}$, it commutes with all boundary operators in its complement, which we denote as $B=\overline{A_{v_{LR}\Delta t}}$. As a consequence, it also commutes with all bulk operators in the code subspace that are supported in the entanglement wedge of the complement region $E_B$. In other words, the bulk operator $\phi_x$ is evolved by the boundary time evolution to an operator $\phi_x(\Delta t)$ which is supported in the entanglement wedge of $A_{v_{LR}\Delta t}$, denoted as $E_{A_{v_{LR}\Delta t}}=\overline{E_B}$. In short we can denote 
\begin{eqnarray}
\phi_x(\Delta t)\in E_{A_{v_{LR} \Delta t}}
\end{eqnarray}

For a fixed point $x$, one can define an infinite family of minimal surfaces passing $x$, which corresponds to an infinite family of boundary regions. The argument above applies to each such region, so that $\phi_x(\Delta t)$ actually is supported in the intersection of the entanglement wedges $E_{A_{v_{LR}\Delta t}}$ for all these region choices (Fig. \ref{fig:envelope} (b)): 
\begin{eqnarray}
\phi_x(\Delta t)\in D(x,\Delta t)\equiv \bigcap_{A: x\in\gamma_A}E_{A_{v_{LR}\Delta t}}
\end{eqnarray}
This result shows that arbitrary local perturbation at $x$, as long as it is in the code subspace, can only spread in the intersection region $D(x,\Delta t)$ after time $\Delta t$. If the location of minimal surface $\gamma_A$ is a continuous function of the boundary region $A$, the intersection $D(x,\Delta t)$ is a disk which shrinks to point $x$ in the $\Delta t\rightarrow 0$ limit. In a spacetime picture, the union of $D(x,\Delta t)$ for all $\Delta t$ is a spacetime region that the bulk light cone starting from point $x$ must reside in. Taking $\Delta t\rightarrow 0$, we obtain upper bound of bulk speed of light if $D(x,\Delta t)$ shrinks to zero.
\begin{figure}[htb!]
  \centering
  \includegraphics[width=0.8\textwidth]{./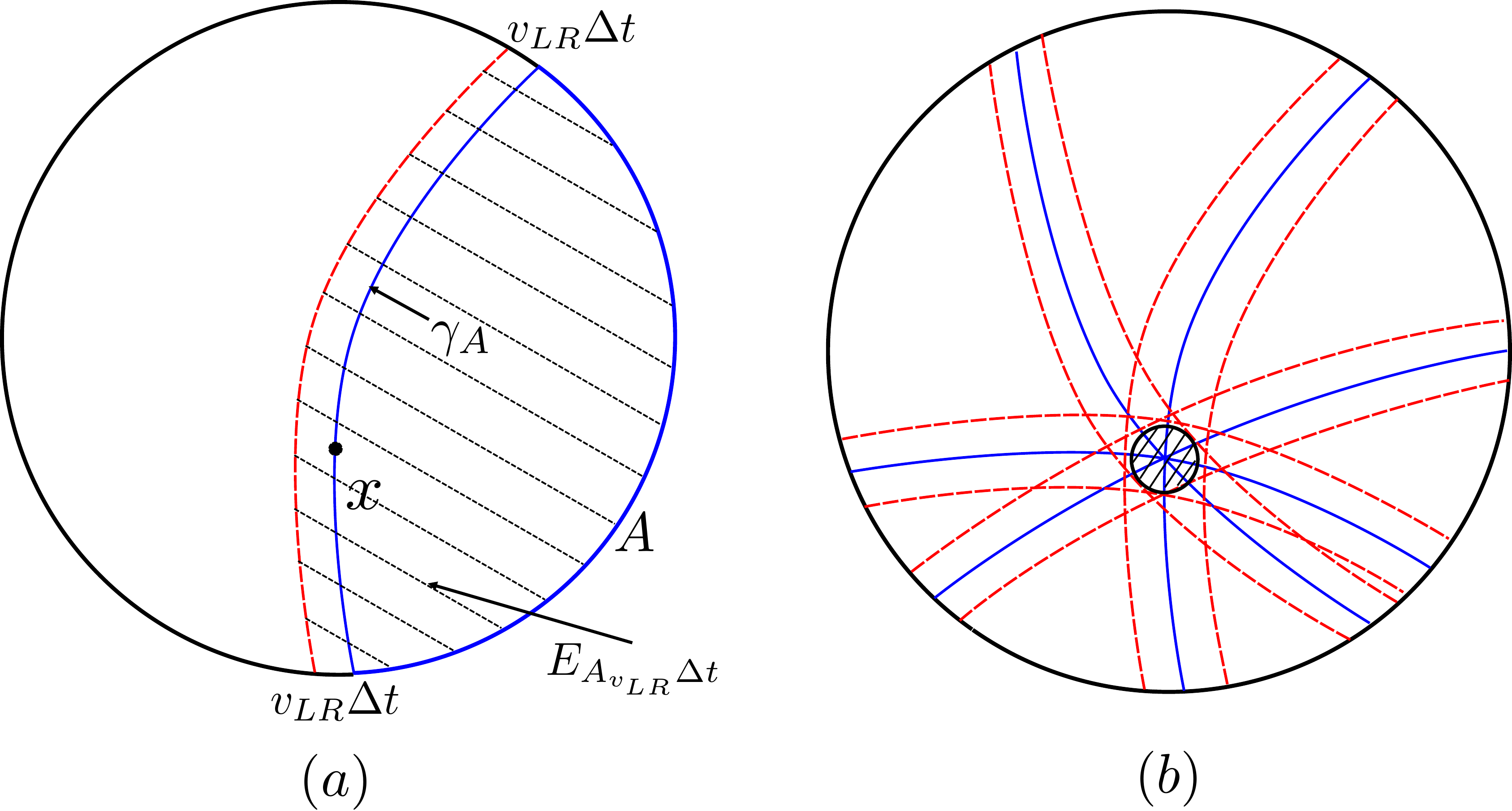}
  \caption{Illustration of the bound on causal future of a point $x$ in the bulk. (a) A bulk local operator $\phi_x$ at a point $x$ on the minimal surface $\gamma_A$ can be reconstructed to a boundary region $A$. The slightly bigger region $A_{v_{LR}\Delta t}$ is defined as the expansion of $A$ by size $v_{LR}\Delta t$ (see text). The shaded region is the entanglement wedge of the bigger region $E_{A_{v_{LR}\Delta t}}$. $\phi_x(\Delta t)$ commutes with all bulk (Schoerdinger) operators outside the shaded region. (b) By repeating the construction in (a) for different boundary regions that reconstruct $\phi_x$, we obtain a domain around $x$ by the intersection of entanglement wedges $E_{A_{v_{LR}\Delta t}}$ for different $A$, which is the upper bound of causal future of $x$ since $\phi_x(\Delta t)$ commutes with all bulk operators outside this region. % \XLQ{[Zhao, please add $E_{A_{v_{LR}\Delta t}}$ and $\gamma_A$ to fig. (a). Also, in future please write the caption following the style of my modification. Your captions look like part of main text. (see also fig 3 caption)]}
  }
  \label{fig:envelope}
\end{figure}

In summary, we have shown that the local reconstruction property and boundary locality (finite $v_{LR}$) put constraints on locality of bulk dynamics. We would like to make a few further comments here. 

Firstly, our starting point in this section is a holographic mapping with error
  correction properties, defined by a spatial geometry. We are always
  assuming the minimal surfaces passing $x$ resides in a particular
  spatial slice, so that RT formula applies. Our result shows that if
  the spatial geometry is time translational invariant, the local
  reconstruction constrains the causal structure in the bulk. Our
  discussion can be generalized to a time-dependent spatial geometry
  (which means we are using different holographic code at different
  time), but that requires the assumption that the extremal surfaces
  bounding boundary regions at a given time all reside in a single
  bulk slice, which is generically not true. 

  Secondly, for geometries with minimal surface experiencing topology
  change, our construction may not apply. For example, in the black
  hole geometries generically there is a region near the horizon
  called ``entanglement shadow" \cite{balasubramanian2014entwinement,
    engelhardt2014extremal}, where no minimal surface bounding any
  boundary region can reach. An operator $\phi_x$ in the entanglement
  shadow can only be reconstructed to a boundary region whose
  entanglement wedge includes the entire black hole and the
  entanglement shadow region. In this case, our construction results
  in a $D(x,\Delta t)$ that remains finite at small $\Delta
  t$. $D(x,\Delta t\rightarrow 0)$ will be the entire entanglement
  shadow. Therefore the locality of physics in the entanglement shadow
  region cannot be understood from local reconstruction. It is an
  interesting open question what is the reason, from boundary
  dynamics, of bulk locality in the entanglement shadow.

In a geometry with entanglement shadow, for points outside the entanglement shadow our construction applies. However, the presence of the entanglement shadow still has a nontrivial effect on the upper bound we obtains. Consider the set of all geodesic surfaces which are minimal surfaces of boundary regions, and include point $x$. As is shown in Fig. \ref{fig:outshadow} (a), in a geometry with entanglement shadow, the normal direction of such surfaces are restricted to certain directions, while in a geometry without entanglement shadow the geodesic surfaces can pass point $x$ along any direction. As a consequence, the domain we obtained by our procedure (expanding minimal surfaces outwards and taking the overlap of their entanglement wedgs) leads to a small wedge with corners, rather than a disk. As an example, in Fig. \ref{fig:outshadow} we show the situation in a BTZ black hole. Starting from a geodesic that bounds half of the boundary (red solid curve), we move the geodesic while fix a point $x$ on it. Such motion ends at the red dashed line which bounds a different region that also has half the boundary size. The geodesic cannot move further beyond this point while still includes $x$. The intersection of small expansion of these geodesics give a diamond shape region (\ref{fig:outshadow} (b)) which means the bound on speed of light along the corner directions are loose. When point $x$ moves closer and closer to the entanglement shadow, the bound along the corner direction (which is the direction parallel to the black hole horizon) becomes worse and worse. When $x$ enters the entanglement shadow, local reconstruction does not give bound to bulk locality any more. % [I have modified the description here to make it less technical.]

\begin{figure}[htb!]
  \centering
  \includegraphics[width=0.9\textwidth]{./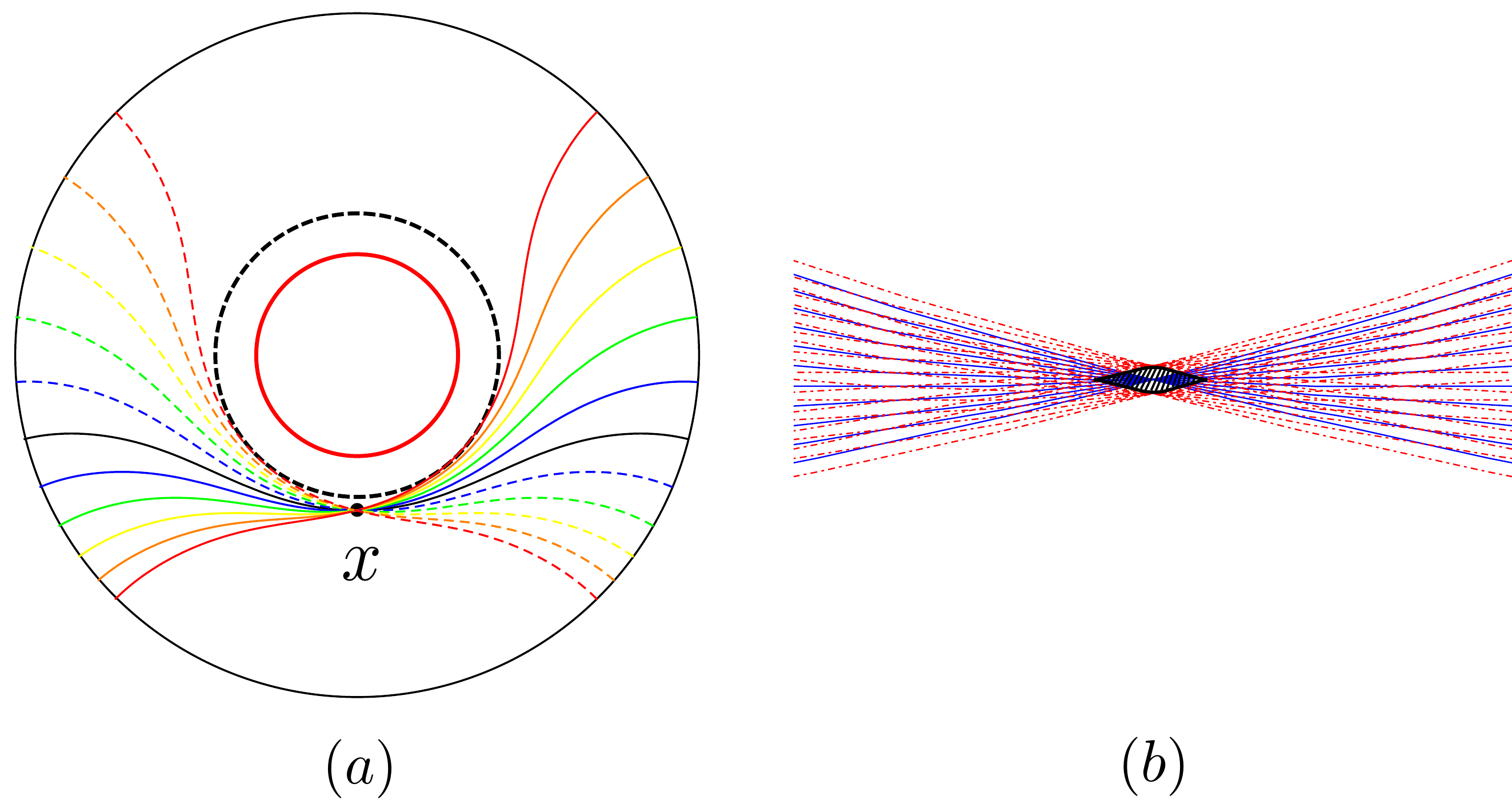}
  \caption{(a) The geodesics
    that cross the bulk point $x$ outside the entanglement shadow (the
    dashed black circle) in the BTZ blackhole geometry with the metric
    $ds^2 = -(r^2-b^2)dt^2 + (r^2-b^2)^{-1}dr^2+r^2 d\phi^2$. The red
    circle is the horizon $r=b$. We take $b=0.5$ in this calculation. We use coordinate
    $l=\frac{2}{\pi}\tan^{-1}(r)$ to map the infinite space
    $r\in[0,\infty]$to a finite disk $l\in[0,1]$. (b) A zoom-in picture of the intersection of expansions of geodesics crossing $x$. The overlap of the expanded geodesics gives the upper bound of causal future of $x$, which is the diamond shape region with corners, because the geodesics only cross $x$ from a finite angle range.}
  \label{fig:outshadow}
\end{figure}

\section{Bulk speed of light determines boundary butterfly velocities}\label{sec:protocol}
In this section, we discuss the other direction of the correspondence. Given a holographic theory with a known bulk space-time geometry, we would like to determine the butterfly velocity of a large family of boundary operators. %how to decide the butterfly velocity in the holographic given the bulk geometry. 
The reasoning is closely related to the discussion in the previous section. For a boundary operator $O_A$ supported in region $A$ which is a local reconstruction of bulk local operator $\phi_x$, we obtain an upper bound of the speed of light at point $x$ from the fact that $O_A$ cannot expand in space (of the boundary) faster than $v_{LR}$. Conversely, if we assume the exact speed of light at $x$, rather than its upper bound, `is already known, this generically imply a slower expansion speed of the dual operator $O_A$, which is its butterfly velocity. 

In contrast to the previous section, since we now assume the bulk
space-time geometry is given, we can work with a generally
time-dependent geometry in which the entanglement entropy of a
boundary region $A$ is given by the extremal surface area defined by
the Hubeny-Rangamani-Takayanagi (HRT) formula
\cite{hubeny2007covariant, wall2012maximin}. The entanglement
  wedge of a boundary region $A$ is a space-time region $E_A$ which is
  the domain of dependence for any space-like surface bounding $A$ and
  the extremal surface $\gamma_A$, as is illustrated in
  Fig.\ref{notation} (a). Any bulk operator in $E_A$ in the code
subspace can be locally reconstructed on
$A$.\cite{almheiri2014bulk,dong2016reconstruction,harlow2016ryu,dong2016bulk}

%We pick up the boundary code-subspace $\mathcal{H}_c$ such that all the boundary states in
%$\mathcal{H}_c$ are dual to the same classical bulk geometry.  Then we
%conclude that the causal structure in the bulk effective field theory
%decides the butterfly velocities of boundary operators. Since the bulk
%geometry is in its classical regime, we have implicitly assumed the
%large $N$ limit. 

%Based on the definition of the generic boundary operators and the
%quantum error correction properties in the holographic theories, we
%know that given a boundary region $A$, all the bulk operators
%$\phi_{E_A}$, that live in its entanglement wedge $E_A$ and touch the
%entanglement surface $\chi_A$, can be reconstructed as $O_A$, a
%generic boundary operator in the region $A$ (Fig.\ref{notation}
%(a))\cite{almheiri2014bulk,dong2016reconstruction,harlow2016ryu,dong2016bulk}.

%In this section, we study the butterfly velocities of the boundary
%operators that are dual to the local bulk operators  living on the
%entanglement surfaces $\chi_A$. We will demonstrate why the light cone
%of the effective theory in the bulk is related to the butterfly
%velocity on the boundary and explicitly build a protocol that derives
%the butterfly velocities from the bulk causal structure.

We start from a boundary spatial region $A$ at boundary time $t=0$, and a bulk point $x$ on the corresponding extremal surface $\gamma_A$ bounding $A$. (Again $x$ should be understood as actually an infinitesimal distance to $\gamma_A$ so that $x$ is included in $E_A$.) An operator $\phi_x$ at $x$ can be reconstructed in region $A$ as $O_{A}[\phi_x]$. We would like to determine the butterfly velocity $v(O_A[\phi_x];\mathcal{H}_c)$ of this operator in code subspace $\mathcal{H}_c$. (The code subspace is spanned by low energy excitations in the bulk that cause negligible back reaction.) 

%and set its boundary time as
%$t= 0$. Then we focus on a local bulk operator $\phi_x$ which lives on
%the extremal surface $\chi_{A}$. Because of the quantum error
%correction property, we reconstruct $\phi_x$ in the region $A$ as
%$O_{A}[\phi_x]$. Now we calculate the commutation relation between
%$O_A[\phi_x]$ and the boundary operators at time $\Delta t$ when
%$\Delta t\rightarrow 0$. 

We first state our conclusion. As has been defined in Eq. (\ref{eq:expansionofA}), we denote the expansion of region $A$ by a size $v\Delta t$ as $A_{v\Delta t}$. If operator $O_A$ grows with butterfly velocity $v$, it will spread to a region $A_{v\Delta t}$ at time $\Delta t$, as is illustrated in Fig.\ref{notation}(b). Denote $v^*$ as the minimal velocity such that the entanglement wedge $E_{A_{v\Delta t}}$ of $A_{v\Delta t}$ includes $x$. We claim that $v^*=v(O_A[\phi_x],\mathcal{H}_c)$ is the butterfly velocity of $O_A[\phi_x]$.

%One piece of notation we set here before the
%calculation is $A_{v\Delta t}$ (Fig.\ref{notation}(b)), which is the
%boundary region at time $\Delta t$ that satisfies
%\begin{equation*}
%  A_{v\Delta t} \equiv \{x\in \text{boundary}\left| \exists y\in A,  ~\text{s.t.}~  d(x,y) \leq  v\Delta t \right \}
%\end{equation*}
\begin{figure}[!htb]
  \centering
  \includegraphics[width=0.8\textwidth]{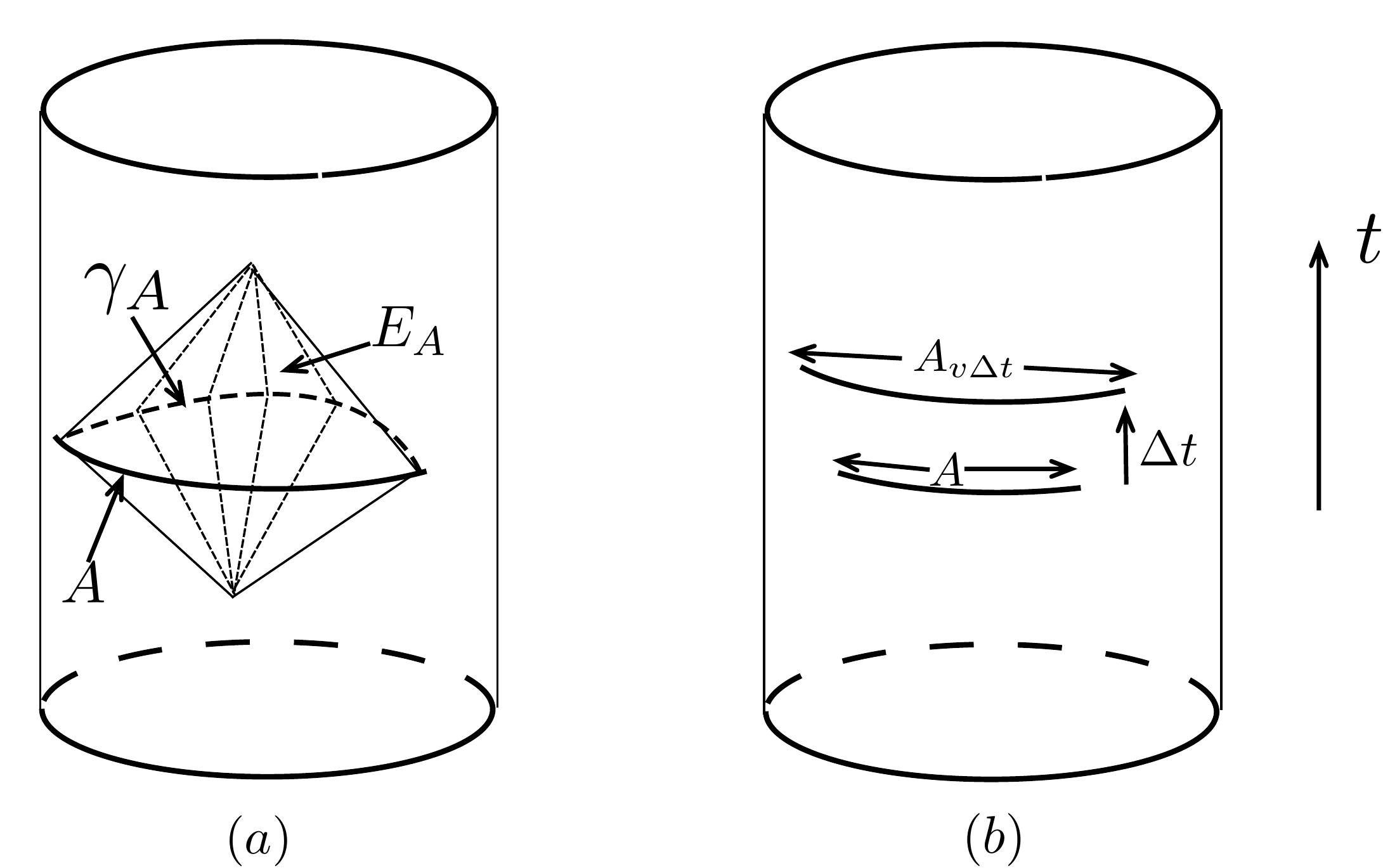}
  \caption{(a) Illustration of the entanglement wedge of a boundary region $A$. The vertical axis is the time direction and
    the boundary of the cylinder represents the boundary theory. The
    dashed curve $\gamma_{A}$ is the extremal surface bounding the region $A$. The entanglement wedge $E_A$ is the domain of dependence of the bulk region enclosed by $\gamma_A\cup A$. The thin straight lines are null geodesics. (b) The relation of region
    $A$ and its expansion $A_{v\Delta t}$ by speed $v$. }
  \label{notation}
\end{figure}
%Then we construct the boundary region $A_{v^*\Delta t}$ at time
%$t= \Delta t$, where $v^*$ is chosen such that the entanglement wedge
%of $A_{v^*\Delta t}$ just includes the bulk point $x$
%(Fig.\ref{protocol} (a)). Since the entanglement wedge of
%$A_{v^*\Delta t}$ include the bulk point $x$, $\phi_x$ can be
%reconstructed in the region $A_{v^*\Delta t}$ as
%$O_{A_{v^*\Delta t}}[\phi_x]$. We claim that the $v^*$ we pick up
%following this protocol is exactly the butterfly velocity,
%$v(O_{A}[\phi_x];\mathcal{H}_c)$, of the boundary operator
%$O_{A}[\phi_x]$.

\begin{figure}[ht!]
  \centering
  \includegraphics[width=0.8\textwidth]{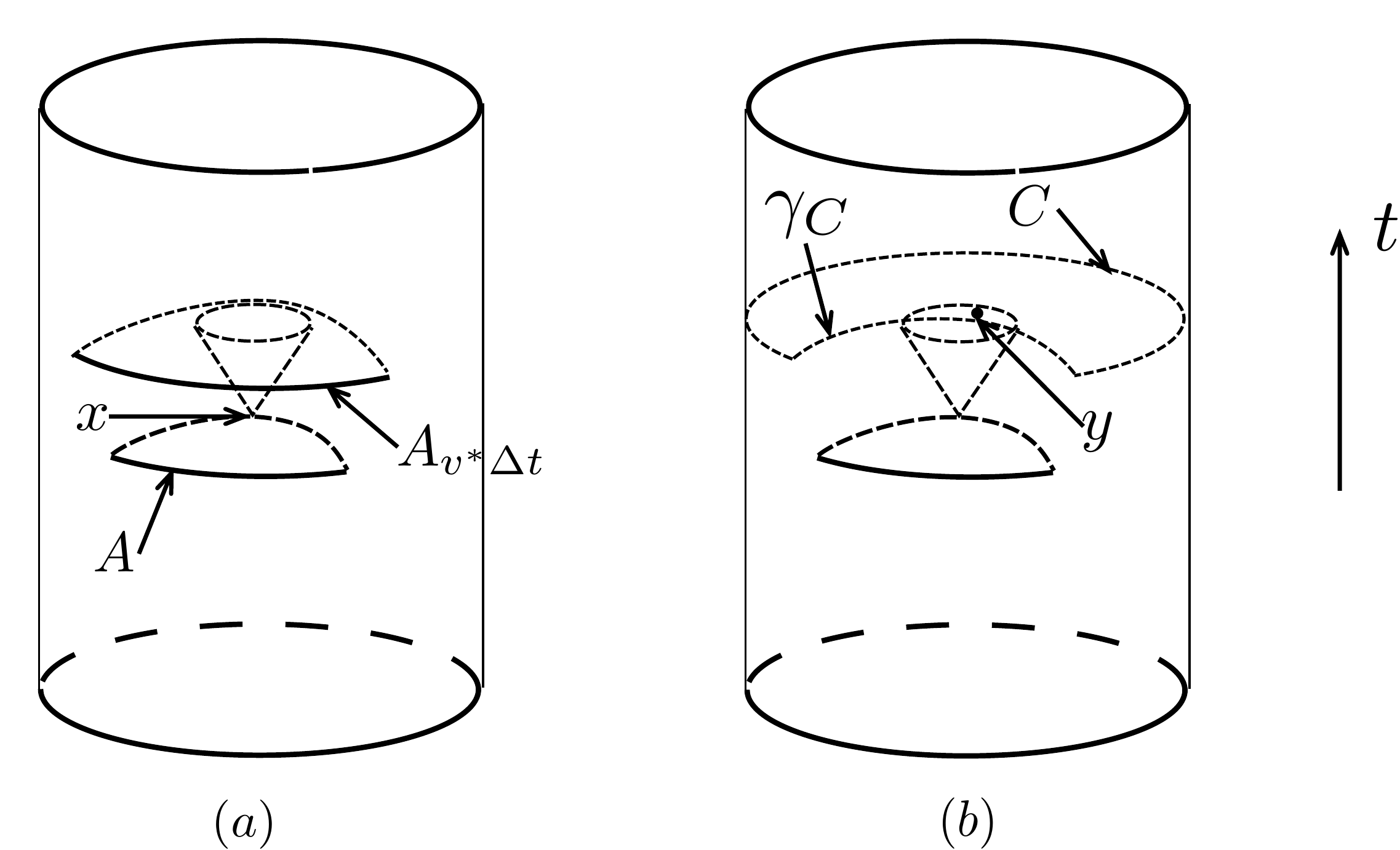}
  \caption{The setup that determines the butterfly velocity $v^*$. (a) For a region $A$ at boundary time $0$ and a point $x$ on the extremal surface bounding $A$, $v^*$ is chosen such that the extremal surface bounding the expanded region $A_{v^*\Delta t}$ at a later time $\Delta t$ is tangential to the light cone of $x$. In other words, $x$ is at the boundary of the entanglement wedge of $A_{v^*\Delta t}$. (b) An illustration why $v^*$ defined in (a) is the butterfly velocity. If we expand region $A$ by any velocity slower than $v^*$, the entanglement wedge of the complement region ($C$) will have a nontrivial intersection with the causal future of $x$. Therefore one can find operators in the intersection region which are reconstructed to $C$ and does not commute with $\phi_x$. This demonstrate that the boundary dual of $\phi_x$ really expand with speed $v^*$.} %For any distance $D<v^*\Delta t$, one can find a region $C$ with that distance to $A$ at boundary time $\Delta t$, with its entanglement wedge $E_C$ intersecting with the future of $x$. An operator $\phi_y$ with point $y$ in the intersection region can be reconstructed to $C$, which thus does not commute with $\phi_x$. Therefore the butterfly velocity
  
%  (a) illustrates how to find the boundary region
%    $A_{v^*\Delta t}$. $A$ is the boundary region at time $t=0$.
%    $\phi_x$ is located at the bulk point $x$, which lives on the
%    extremal surface $\chi_A$ of the region $A$. $A_{v^*\Delta t}$ is
%    the boundary region at time $t=\Delta t$, whose extremal surface
%    just enclose the future lightcone of $x$. In (b), $y$ is the bulk
%    point that is within the future lightcone of $x$ and is
%    arbitrarily close to the extremal surface $\chi_{A_{v^*\Delta
%        t}}$. Thus as long as $D<v^*\Delta t$, the region $C$ can be
%    found whose distance to the region $A$ is $D$. The bulk operator
%    $\phi_y$ can be reconstructed on $C$ as $O_c(\Delta t)$, which
%    does not commute with $O[\phi_x]$ in the code subspace.
  \label{protocol}
\end{figure}

To prove this conclusion, we need to prove two conditions. 1) Any operator on the complement of $A_{v^*\Delta t}$ (at time $\Delta t)$) commutes with $O_A[\phi_x]$. This proves the butterfly velocity $v(O_A[\phi_x])\leq v^*$. 2) For any velocity $v<v^*$, one can find operator on the complement of $A_{v\Delta t}$ that does not commute with $\phi_x$. This proves $v(O_A[\phi_x])= v^*$.

%$ \langle\psi_i| [O_A, B(\Delta t)]|\psi_j\rangle = 0$,
%$\forall ~ B\in \{R|d(R,A)= v(O_A;\mathcal{H}_c) \Delta t\}$,
%$\forall ~|\psi_i\rangle, |\psi_j\rangle \in \mathcal{H}_c$ and the we prove that $v^*$ is the minimum.  
We start from the first condition. By definition of $v^*$, $\phi_x$ can be reconstructed on the boundary as an operator $O_{A_{v^*\Delta t}}[\phi_x]$ in region $A_{v^*\Delta t}$ at time $\Delta t$. For any boundary operator $O_B(\Delta t)$ supported in a region $B$ at time $\Delta t$ that does not intersect $A_{v^*\Delta t}$,  we have
%respectively, thus $\forall O$ at time $\Delta t$, $\forall |\psi_i\rangle, |\psi_j\rangle \in \mathcal{H}_c$,
\begin{equation}
  \langle \psi_i | \big[O_{A}[\phi_x], O_B(\Delta t)\big]|\psi_j\rangle = \langle \psi_i | \big[\phi_x , O_B(\Delta t)\big]|\psi_j\rangle = \langle \psi_i | \big[O_{A_{v^*\Delta t}}[\phi_x], O_B(\Delta t)\big]|\psi_j\rangle =0
\end{equation}
for any pair of states $|\psi_i\rangle, |\psi_j\rangle \in\mathcal{H}_c$.  Therefore the butterfly velocity $v(O_A[\phi_x],\mathcal{H}_c)\leq v^*$. 

%Thus for the boundary region $B$ at the time slice $t =\Delta t$, as
%long as
%$B \cap A_{v^*\Delta t} = \emptyset \Rightarrow \big[O_{A_{v^*\Delta
%    t}}[\phi_x], O_B\big]=0$. From the definition of
%$A_{v^*\Delta t}$, the above statement is equivalent to that if the
%distance between the boundary region $A$ and $B$ satisfies
%$d(B,A)> v^* \Delta t \Rightarrow \langle \psi_i | \big[O_{A}[\phi_x],
%B(\Delta t)\big]|\psi_j\rangle = 0$.

Now we prove the second condition. By definition of $v^*$, for any $v<v^*$ the entanglement wedge 
$E_{A_{v\Delta t}}$ will not include $x$. Denote the complement of $A_{v\Delta t}$ as $C$, as is shown in Fig. \ref{protocol} (b). Since $x\notin E_{A_{v\Delta t}}$, the future of $x$ has a nontrivial intersection of $E_C$. In other words, there exists a bulk point $y\in E_C$ that is time-like separated to $x$. For a generic operator $\phi_x$, there exists operator $\phi_y$ which does not commute with $\phi_x$. Denote $O_C[\phi_y]$ as the reconstruction of $\phi_y$ on region $C$, we concluded that there exists some states $|\psi_i\rangle,~|\psi_j\rangle\in\mathcal{H}_c$, such that
\begin{eqnarray}
  \langle \psi_i | [\phi_x , \phi_y]|\psi_j\rangle \neq 0
  \Rightarrow  \langle \psi_i | \big[O_{A}[\phi_x] ,O_C[\phi_y]\big]|\psi_j\rangle \neq 0
\end{eqnarray}
This proves that the butterfly velocity must be larger than $v$ for any $v<v^*$. Therefore we reach the conclusion that $v^*$ is equal to the butterfly velocity $v(O_A[\phi_x],\mathcal{H}_c)$. 

%that is both arbitrarily close to
%the entanglement surface $\chi_{A_{v^*\Delta t}}$ and included in the
%future light-cone of point $x$ (Fig.\ref{protocol} (b)).  Since $y$ is
%arbitrarily close to the $\chi_{A_{v^*\Delta t}}$, then $\phi_y$ can
%be reconstructed on the boundary $C$ at time $t=\Delta t$, which has
%arbitrarily small intersection with $A_{v^*\Delta t}$. We denote the
%reconstruction of $\phi_y$ on the region $C$ as $O_C(\Delta t)$. On
%the other hand, because $\phi_x$ is the field operator of the
%effective field theory living on top of the semi-classical background
%geometry in the code-subspace,
%$\exists |\psi_i\rangle, |\psi_j\rangle \in \mathcal{H}_c$
%\begin{eqnarray*}
%  \langle \psi_i | [\phi_x , \phi_y]|\psi_j\rangle \neq 0
%  \Rightarrow  \langle \psi_i | \big[O_{A}[\phi_x] ,C(\Delta t)\big]|\psi_j\rangle \neq 0
%\end{eqnarray*}
%Because the region $C$ has arbitrary small intersection with
%$A_{v^*\Delta t}$, it means as long as $D< v^* \Delta t$, I can find
%the the region $C$ at time $t=\Delta t$, which does not commute with
%$O_{A}(\phi_x)$ in the code-subspace. Thus $v^*$ is the minimum velocity that satisfies Eq.\ref{def}.

We would like to emphasize that this protocol of determining the butterfly velocity is covariant, so that it applies to a generic geometry without time translation symmetry. In the following sections, we will study various properties of butterfly velocity based on this protocol. Although the duality is only established for asymptotic AdS geometries, it is well-defined to study the consequence of our protocol in even more general geometries, assuming the local reconstruction and HRT formula generalizes. For example, in Sec.\ref{sec:flat} we will apply this protocol to the flat space with a finite boundary, which provide conditions that the holographic dual theory of a flat space weakly coupled gravity have to satisfy, if such theory exists. 

%Also the protocol
%explicitly satisfies the diffeomorphism invariance in the bulk,
%because the construction of $A_{v^*\Delta t}$ only depends on the
%extremal surfaces and the light-cone of the bulk point $x$, which are
%independent of the coordinates in the bulk. Also, it means this
%protocol applies to the dynamical bulk geometry.

\section{General properties of butterfly velocity}\label{sec:generalproperties}

\subsection{Monotonicity}\label{sec:monobv}

In this section, we obtain some general properties of the butterfly
velocity based on the assumption that the bulk dual geometry satisfies
Einstein equation (EE) and the null energy condition (NEC).  In short,
we show that among bulk local operators $\phi_x$ on the same extremal
surface $\gamma_A$, the butterfly velocity is a non-increasing
function of the distance from $x$ to the boundary. Intuitively,
operators in the infrared always move with a velocity that is smaller
or equal to those in the ultraviolet. Our result applies to disk shape
regions on a boundary geometry with spatial rotation
symmetries. 

In the following we will explain our result and provide an intuitive explanation of the main idea of the proof. The rigorous proof will be given in Appendix. \ref{app:mono1}. We consider a disk shape region $A$ on the boundary. %$d-1$ dimensional spherical region $A$ on the boundary. 
When the boundary has rotation symmetry that preserves the disk $A$, the corresponding extremal surface $\gamma_A$ also has rotation symmetry. Points on $\gamma_A$ can be parametrized by $\Omega_{d-2}$, the $d-2$ dimensional angular coordinates, and $z$ which parameterizes the direction perpendicular to the boundary. The
boundary theory locates at $z=0$ (see Fig.\ref{mono1} (a)). Consider a bulk operator $\phi(z,\Omega_{d-2})$ which is locally reconstructed to region $A$ as a boundary operator $O_A[\phi(z,\Omega_{d-1})]$. Because of
the rotational symmetry, it is clear that the butterfly velocity $v\left(O_A[\phi(z,\Omega_{d-1})],\mathcal{H}_c\right)$ only depends on $z$, which we will denote as $v_A(z)$. Our result is that for any two points at $z_1<z_2$, $v_A(z_1)\geq v_A(z_2)$, as is illustrated in Fig.\ref{mono1} (b). 

%The
%statement is that, the deeper $\phi(z,\Omega_{d-1})$ goes into the
%bulk (the bigger the $z$ is), the smaller the butterfly velocity of
%its reconstruction in region $A$ will be ( Fig.\ref{mono1} (b)). This
%monotonicity result is consistent with our intuition that the
%butterfly velocities decrease when the operators move from the UV
%region to the IR region. We present the proof in
%Appendix.\ref{app:mono1}.

\begin{figure}[ht!]
  \centering
  \includegraphics[width=0.9\textwidth]{./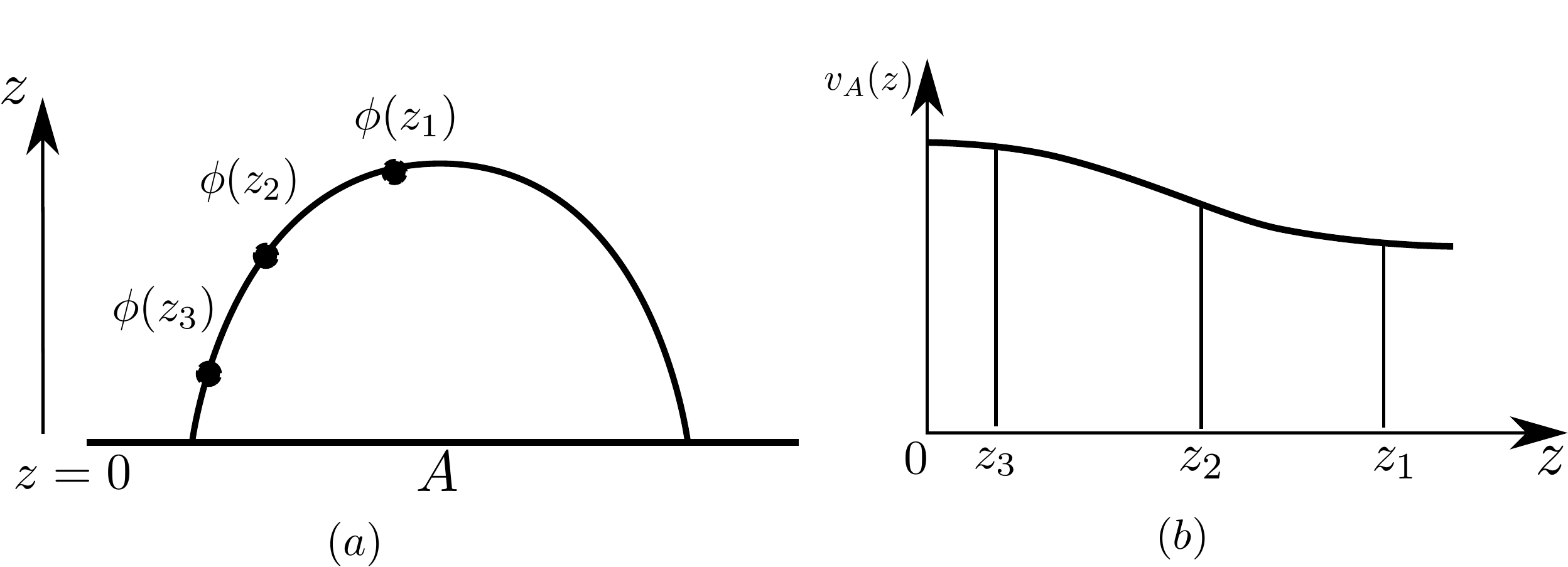}
  \caption{(a) Three bulk local operators $\phi(z_1), \phi(z_2), \phi(z_3)$ with decreasing distance to the boundary $z_1>z_2>z_3$, reconstructed to the same region $A$. (b) Schematic plot of butterfly velocity as a function of radial coordinate $z$ which decreases monotonously when the operator moves deeper in the bulk.}
  \label{mono1}
\end{figure}

This monotonicity property results from the property of null expansion in geometries satisfying EE and NEC, which has also played an essential role in proving the entanglement wedge is outside n the causal wedge in asymptotic AdS geometries\cite{wall2012maximin, hubeny2013global}. (To clarify, our result is not restricted to asymptotic AdS.) With more details reserved to Appendix \ref{app:mono1}, here we would like to provide some intuitive illustration to the proof for the simplest case of statistic geometries. We start by considering the butterfly velocity of the operator at the ``tip" of $\gamma_A$ (the point with $z=z_m$ maximal), which is determined by the minimal surface $\tilde{\gamma}$ in Fig. \ref{fig:proofsketch}. $\tilde{\gamma}$ is defined at boundary time $\Delta t$ and has a distance $c\Delta t$ to the tip point. In other words, as $\Delta t$ increases from $0$, the tip point of $\tilde{\gamma}$ grows with bulk speed of light. The butterfly velocity of the tip point $v_m$ is determined by the growth velocity of the boundary position of $\tilde{\gamma}$. As is shown in Fig. \ref{fig:proofsketch}, if $\tilde{\gamma}$ is anchored to a boundary disk that is bigger than $A$ by $x$, the butterfly velocity of tip operators is $v_m=x/\Delta t$. Now we pick a different point at depth $z<z_m$. The distance of point $z$ to the surface $\tilde{\gamma}$ is generically different from $c\Delta t$, which we denote as $u(z)\Delta t$. In other words, $u(z)$ is the speed of expansion of $\tilde{\gamma}$ at point $z$. To determine the butterfly velocity $v(z)$ of this point, one should consider another minimal surface that expands at $z$ point with speed of light, and expands at the boundary with velocity $v(z)$. Note that the expansion speed at different locations of the surface are proportional to each other, we have
\begin{eqnarray}
\frac{u(z)}{v_m}=\frac{c}{v(z)}\Rightarrow v(z)=\frac{c}{u(z)}v_m
\end{eqnarray}
Therefore the butterfly velocity of all points on $\gamma_A$ can be
determined by the single surface $\tilde{\gamma}$. To understand the
general behavior of $u(z)$, we draw some surfaces $C(d)$ which has
constant distance $d$ to the surface $\gamma_A$ (see
Fig. \ref{fig:proofsketch}). In space-time, such surfaces are obtained
by null expansion of $\gamma_A$ for time $d/c$. The key consequence of
EE and NEC is that any extremal surface like $\tilde{\gamma}$ cannot
be tangential to any $C(d)$ from outside \cite{wall2012maximin,
  hubeny2013global, headrick2014causality}. In other words, either
$\tilde{\gamma}$ coincide with $C(d)$ with $d=c\Delta t$ for all $z$,
or $\tilde{\gamma}$ crosses $C(d)$ surfaces, in which case the
distance $u(z)\Delta t$ has to decrease as $z$
decreases. Consequently, $v(z)\propto 1/u(z)$ always increases or
stays constant as $z$ decreases towards the boundary. A more rigorous
proof of this result is given in Appendix \ref{app:mono1}.

\begin{figure}[ht!]
  \centering
  \includegraphics[width=0.6\textwidth]{./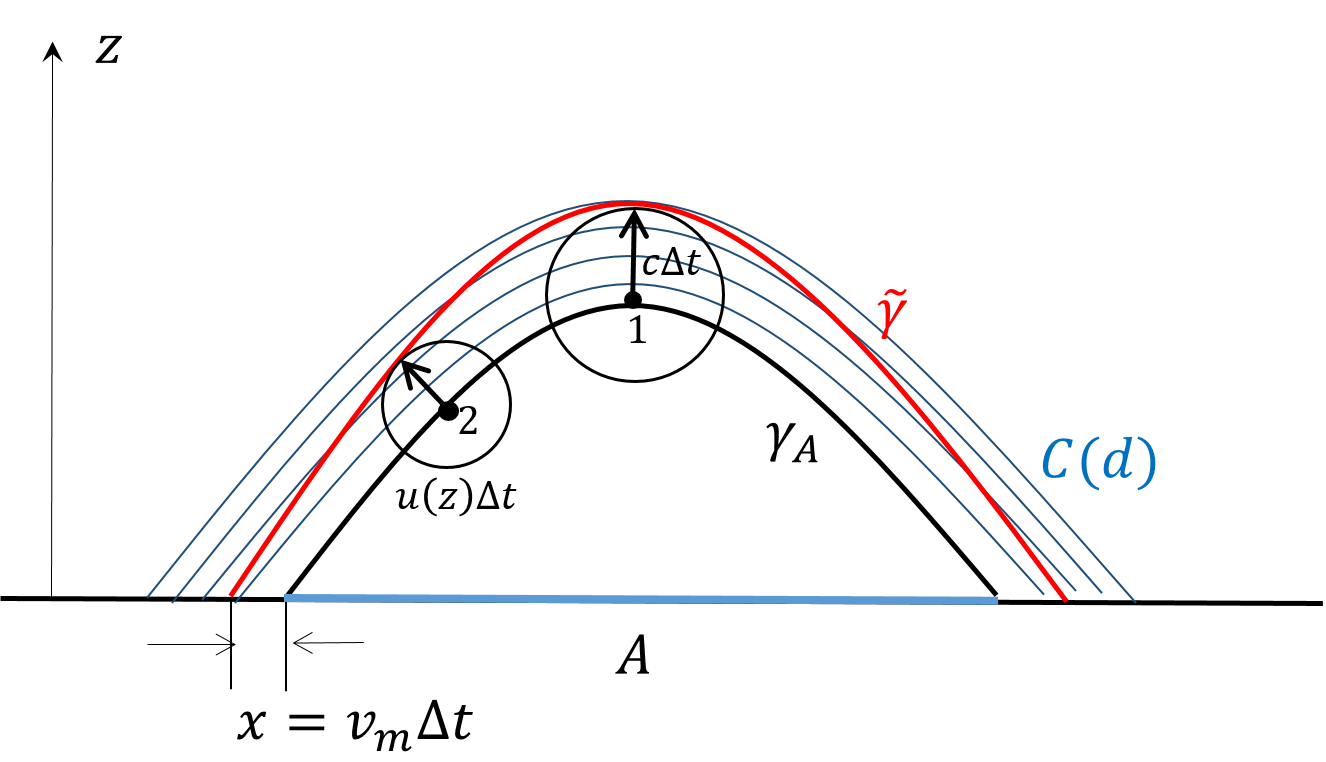}
  \caption{The setup in the proof of monotonicity of butterfly
    velocity (for the case of static geometries). The horizontal line
    at the bottom is the boundary at $z=0$. The black thick curve
    represents extremal surface $\gamma_A$ which bounds a boundary
    region $A$ (blue thick line). The red curve stands for a minimal
    surface $\tilde{\gamma}$ that has a distance $c\Delta t$ to the
    tip point $1$. At a different point $2$ with depth $z$,
    $\tilde{\gamma}$ expands with a different speed $u(z)$ so that it
    is $u(z)\Delta t$ away from point $2$. The blue think curves
    $C(d)$ are expansion of $\gamma_A$ by distance $d$ for different
    $d$. The key of the proof is that $\tilde{\gamma}$ cannot be
    tangential to $C(d)$'s anywhere other than the tip point. }
  \label{fig:proofsketch}
\end{figure}

One direct consequence of this result is that the butterfly velocity
of all boundary operators are smaller or equal to the speed of light
if the geometry is asymptotically AdS. A straightforward calculation
tells us that the butterfly velocity of all the operators are equal to
the speed of the light for a pure AdS geometry (see
Sec.\ref{example}). Thus for an asymptotic AdS geometry,
$\lim_{z\rightarrow 0}v_A(z) = c$. Consequently $v_A(z)\leq c$ for all
$z$ due to the monotonicity. This is consistent with our expectation
that the boundary theory with the asymptotic AdS dual is
relativistic. This is consistent with previously known results that there is no superluminal bulk signaling
between boundary points
\cite{gao2000theorems,engelhardt2016gravity,woolgar1994positivity}. In contrast, for geometries that are not asympotically
AdS, butterfly velocity can exceed speed of light even if the geometry
satisfies EE and NEC. One example is the flat space, which we will
study in Sec. \ref{sec:flat}.

%unfinished here

\subsection{When is the butterfly velocity equal to $c$?}\label{sec:causalwedge}

In the previous subsection, we have discussed that, for a boundary theory
dual to an asymptotic AdS gravity that satisfies EE and NEC, the
butterfly velocity of the boundary operators are less than or equal to the
speed of the light. It is natural to ask in general when the upper bound is saturated. In this subsection we will prove that the entanglement wedge and causal wedge of a boundary region $A$ coincide, then the butterfly velocity of all local operators on $\gamma_A$ is equal to the speed of light
$v(A;\mathcal{H}_c)=c$.

We will present the intuitive interpretation here and leave the
rigorous proof in the Appendix.\ref{app:proof}. In order to prove that
$v(A;\mathcal{H}_c)=c$ for all the generic operators when the causal
wedge of $A$ coincides with the entanglement wedge of $A$, we only
need to show that the entanglement wedge of $A_{u\Delta t}$ does not
contain any part of $\gamma_A$ if $u<c$. $\gamma_A$ is the
entanglement surface of $A$ which coincides with $A$'s causal
surface. The entanglement wedge $E_A$ is the domain of dependence of
the region enclosed by $A\cup \gamma_A$ (Fig.\ref{causalwedge}(a)). In
Fig.\ref{causalwedge}(b), the blue line is the extremal surface of the
boundary region $A_{c\Delta t}$ and the red line is the extremal
surface of the boundary region $A_{u\Delta t}$, where $u<c$. Because
the causal wedge of $A$ coincides with its entanglement wedge
$C_A=E_A$, then obviously $C_{A_{u\Delta t}}$, the causal wedge of
$A_{u\Delta t}$ (the red dashed line in Fig.\ref{causalwedge}(b)),
does not contain any part of $\gamma_A$ if $u<c$. On the other hand, in
the Appendix.\ref{app:proof}, we prove that, if $C_{A} = E_{A}$, then
for the boundary regions that are sufficiently close to $A$, such as
$A_{u\Delta t}$ and $A_{c\Delta t}$ with $\Delta t\rightarrow 0$, the
difference between the entanglement wedges and the causal wedges are
of $O(\Delta t^2)$. While the difference between $C_{A_{u\Delta t}}$
and $C_{A_{c\Delta t}}$ is of $O((c-u)\Delta t)$, thus as long as
$u<c$, $E_{A_{u\Delta t}}$ does not contain any part of
$\gamma_A$. Details are presented in Appendix.\ref{app:proof}.
% \XLQ{[Since we are giving intuitive explanation here, I feel it's
% better to not mention distance function. Just explain that the
% causal wedge and entanglement wedge only deviate from each other by
% $O(\Delta t^2)$ should be easier to understand. ]}

\begin{figure}[htb!]
  \centering
  \includegraphics[width=0.8\textwidth]{./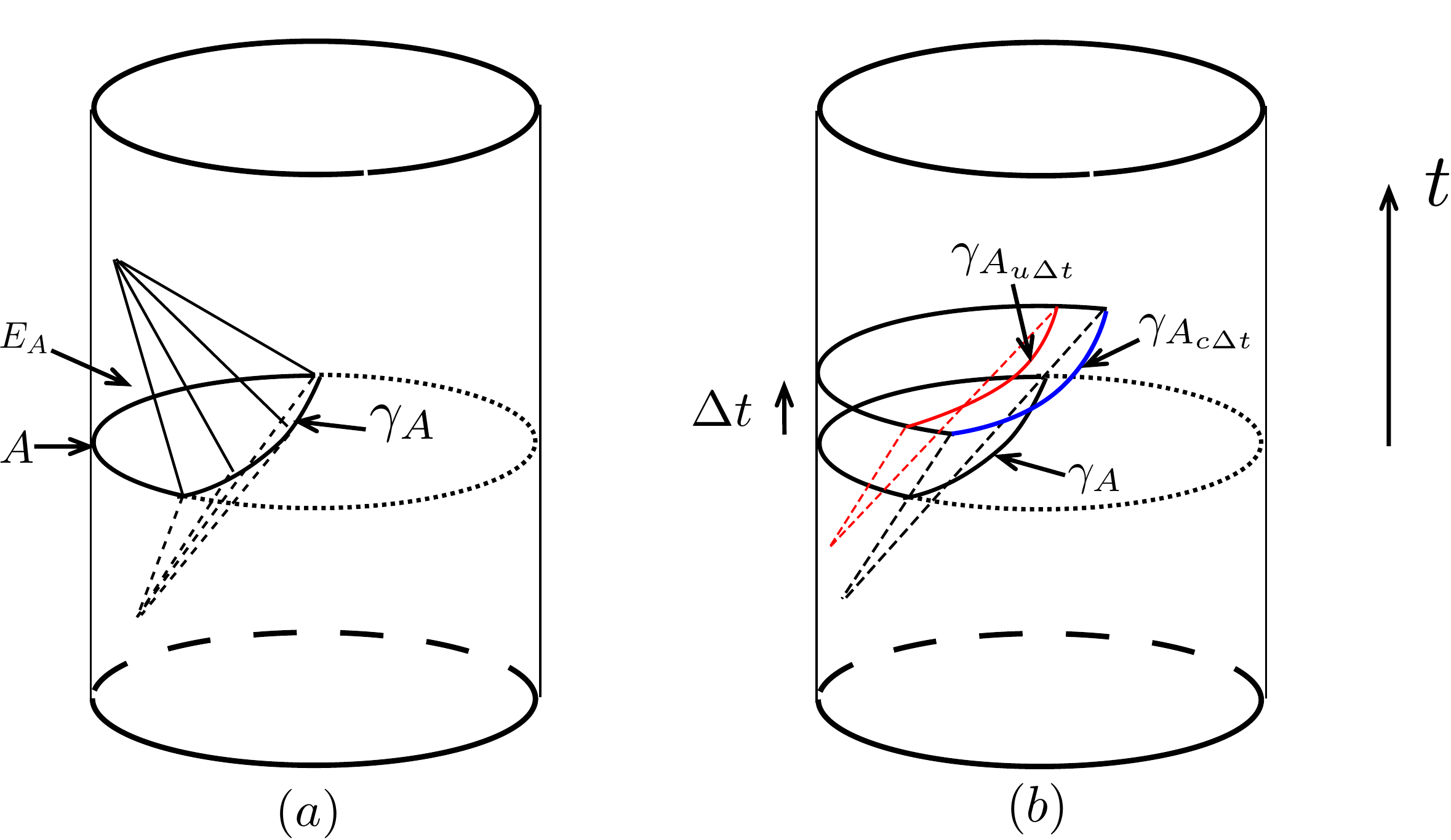}
  \caption{(a) Illustration of a region $A$ with its entanglement wedge coincide with causal wedge. (b) For any velocity $u<c$, the entanglement surface of the expansion $A_{u\Delta t}$ (red solid curve) does not intersect with $\chi_A$, so that the butterfly velocity $v^*>u$, which leads to the conclusion $v^*=c$.
%   In (a), the solid boundary region is $A$, the solid bulk
%    line is the entanglement surface $\chi_A$, which coincides with
%    $A$'s causal surface by assumption.  The entanglement wedge $E_A$
%    is the domain of dependence of the region enclosed by
%    $A\cup \chi_A$. In (b), the red and the blue solid line are
%    $\chi_{A_{u\Delta t}}$, the entanglement surface of
%    $A_{u\Delta t}$, and $\chi_{A_{c\Delta t}}$, the entanglement
%    surface of $A_{c\Delta t}$ respectively.  The red dashed line is
%    the past causal wedge of $A_{u\Delta t}$. It is obvious that
%    $C_{A_{u\Delta t}}$, the causal wedge of $A_{u\Delta t}$ does not
%    contain any part of $\chi_A$ if $u<c$.
 }
  \label{causalwedge}
\end{figure}

\section{Examples}\label{example}
\subsection{Pure AdS and BTZ black hole}
In this subsection, we follow our protocol in Sec.\ref{sec:protocol}, and
calculate the butterfly velocities of operators in boundary theory
whose bulk dual geometry is the $d+1$ dimensional AdS space. A
straightforward calculation shows us that the butterfly velocities are
indeed the speed of light $c=1$. This consistent with our conclusion
in Sec.\ref{sec:causalwedge}, since in pure AdS the causal wedges
coincide with the entanglement wedges.

The metric of $d+1$ dimensional AdS is
$ds^2 = (-dt^2 + dz^2 + \sum_i dy_i^2)/z^2 $. Since the space-time is
translationally invariant in $t$ and $y_i$, without loss of
generality, we focus on the spherical boundary region $A$ at
$t=0$ centered at $y_i = 0$. The radius of the boundary region is $R$,
and the minimal surface $\gamma$ that covers this boundary region is
$ \sum_i y_i^2 + z^2 = R^2$. For an arbitrary point $x$ on the minimal
surface, we can choose the coordinate so that $y_i(x) = 0$,
$i=2\cdots d-1$, thus $y_1(x)^2+z(x)^2 = R^2$.

We then expand the radius of the boundary region a little bit to
$R+\Delta R$ and the new minimal surface $\gamma^\prime$ is
$\sum_i (y_i(x)+\Delta y_i)^2 +(z(x) +\Delta z)^2 = (R+\Delta R)^2$. $\chi^\prime$ is located at time slice $t=\Delta t$. A straightforward calculation shows that the shortest distance between point $x$ and $\gamma^\prime$ is 
\begin{equation}
  \label{eq:5}
  d(x,\gamma^\prime) =  \frac{1 }{z(x)^2} \min\left(\sum_i\Delta y_i^2 +\Delta z^2 -\Delta t^2\right) = \frac{1 }{z(x)^2}\left( -\Delta t^2 + \Delta R^2 \right)
\end{equation}
The butterfly velocity is determined by choosing a $\gamma^\prime$ that is lightlike separated from $x$, with $d(x,\gamma')=0$. This requires $\Delta t=\Delta R$ and gives the butterfly velocity 
%Thus in order that the light-cone of point $x$ is just enclosed by
%$\chi^\prime$, $d(x,\chi^\prime) = 0$, which means
%$\Delta t = \Delta R$. Thus the butterfly velocity of the bulk
%operator $\phi(x)$ that is reconstructed on the boundary region $A$
\begin{equation}
  \label{eq:6}
  v(O_A[\phi_x];\mathcal{H}_c) = \frac{\Delta R}{\Delta t} = 1
\end{equation}

Since the $2+1$ BTZ blackhole is locally identical to pure AdS space, the butterfly velocity for the dual of bulk local operators outside the entanglement shadow of BTZ black hole is also equal to speed of light. 
%since $2+1$ AdS space is the universal covering space of $2+1$ BTZ black hole. Thus we conclude that outside the entanglement shadow in $2+1$ BTZ, the butterfly velocity of the bulk operators after being reconstructed onto the boundary is $c=1$. 
This is consistent with our
understandings that the butterfly velocities of operators in
$1+1$d CFT at the finite temperature is still $c$.

\subsection{Higher dimensionsal AdS Schwarzschild black hole} \label{example2}

In higher dimensions, butterfly velocities of operators in finite temperature
systems are smaller than the speed of the light. The butterfly
velocities of operators evolved after the scrambling time has been
calculated in both the shock wave geometry approach
\cite{shenker2013black, roberts2016lieb, roberts2014localized} and the near-horizon minimal surface approach
\cite{mezei2016entanglement}. Our result can be considered as a generalization of the latter. 

In this subsection, we systematically study the butterfly velocities of
operators in a finite temperature CFT in $d$ dimensions, for which the
bulk dual geometry is the $d+1$-dimensional AdS Schwarzschild blackhole.

The metric of AdS Schwarzschild blackhole is
\begin{eqnarray}
  ds^2 = - U(r) dt^2 +U(r)^{-1} dr^2 +r^2 d\Omega_{d-1}^2 ~~~~~~
  U(r) = 1-\frac{\mu}{r^{d-2}} + \frac{r^2}{l^2} 
\end{eqnarray}
with $l$ the AdS radius. $\mu$ is related to the black hole mass by
$M = \Omega_{d-1} \mu (d-1) /(16\pi G_N)$, where $\Omega_{d-1}$ is the
area of a $d-1$-dimensional unit radius sphere. For simplicity, we fix $l=1$. In
this coordinate, $r\rightarrow\infty$ is the boundary. For $d>2$, the
extremal surfaces do not have a nice analytic form, thus we
numerically implement our protocol and calculate the butterfly
velocities of generic operators supported on spherical boundary
regions.

For rotation symmetric boundary region, the minimal surface is also rotation symmetric. The intersection of the minimal surface at a fixed radial coordinate $r$ is a spherical cap on the sphere $S^{d-1}$. We can choose a coordinate $d\Omega_{d-1}^2 = d\theta^2 + \sin^2\theta d\Omega_{d-2}^2$ for the fixed $r$ sphere, with $\theta=0$ at the center of the sphere. The minimal surface is parameterized by a function $r(\theta)$. 
%Because we only care about the spherical regions on the boundary, we expand $d\Omega_{d-1}^2 = d\theta^2 + \sin^2\theta d\Omega_{d-2}^2$, and the extremal surfaces can be parametrized as $r(\theta)$. 
Besides, the geometry of AdS Schwarzschild black hole is static, so that the
extremal surfaces live on a constant $t$ slice.  The area of
this co-dimension $2$ surface is
\begin{equation}
  Area = \Omega_{d-2}  (r\sin\theta)^{d-2} \sqrt{U(r)^{-1} \left(\frac{dr}{d\theta}\right)^2  +r^2} 
\end{equation}
%where in this expression $r$ is a function depends on $\theta$.
$r(\theta)$ is determined by minimizing the area.

We specify a boundary region $A$ by the maximal depth that
$A$'s extremal surface penetrates into the bulk, given by
$r_A(\theta=0) = r_0$, and $\frac{dr_A}{d\theta}|_{\theta=0}= 0$.
Then we find the solution $r_A(\theta)$ with this boundary condition, and another solution
$\tilde{r}_A(\theta)$ with a slightly different boundary condition
$\tilde{r}_A(0) = r_0+ \delta$,
$\frac{d\tilde{r}'_A}{d\theta}|_{\theta=0}= 0$, with $\delta \ll r_0$.
From these two solutions, we can decide numerically the butterfly
velocities of the boundary reconstruction of the bulk operator
$\phi(r_x)$, which is located at $r= r_x$, $\theta_x=r_A^{-1}(r_x)$.
\begin{eqnarray}
  \label{eq:2}
  &&v\left(O_A[\phi(r_x)];\mathcal{H}_c\right) =\left. \frac{\Delta \theta}{\Delta t}\right|_{r\rightarrow \infty}  \\
\nonumber &&  \left. \Delta \theta\right|_{r\rightarrow \infty} = \tilde{r}_A^{-1}(\infty) - r_A^{-1}(\infty) \\
\nonumber &&  \left. \Delta t \right|_{r\rightarrow \infty} = \sqrt{ \frac{  \min_\theta \left[ U(r_x)^{-1} (r_x - \tilde{r}_A(\theta))^2 + r_x^2 \left(\theta_x -\theta \right)^2 \right]   }{ U(r_x)} }
\end{eqnarray}

We plot the butterfly velocities
$v\left(O_A[\phi(r_x)];\mathcal{H}_c\right)$ of the boundary
reconstructions of the bulk operators $\phi(r_x)$ living on the
background of $3+1$, $4+1$, $5+1$ AdS Schwarzschild black hole
(Fig.\ref{fig:AdSBH}). The black horizontal lines are $\sqrt{\frac{d}{2(d-1)}}$,
the butterfly velocity predicted by \cite{shenker2013black,
  roberts2016lieb, roberts2014localized, mezei2016entanglement} of the
operators evolved after the scrambling time.  $d$ is the spatial
dimension of the bulk geometry. Each curve corresponds to the butterfly velocity of points on a minimal surface with fixed $r_0$. 

%As for the red, the orange and the
%blue lines, if we follow a single colored curve, it shows the
%evolution of the the butterfly velocities when we reconstruct the bulk
%operators at different depth $r_x$ on the extremal surface onto the
%same boundary region. The red and orange lines correspond to the
%boundary regions whose extremal surfaces stop penetrating into the
%bulk at $2r_{BH}$ and $(1+10^{-2})r_{BH}$, while the blue lines in
%(a), (b), (c) correspond to the boundary region whose extremal
%surfaces stop at $(1+10^{-4})r_{BH}$, $(1+4*10^{-5})r_{BH}$,
%$(1+8*10^{-6})r_{BH}$. $r_{BH}$ is the location of the horizon.
\begin{figure}[ht!]
  \centering
  \includegraphics[width=0.9\textwidth]{./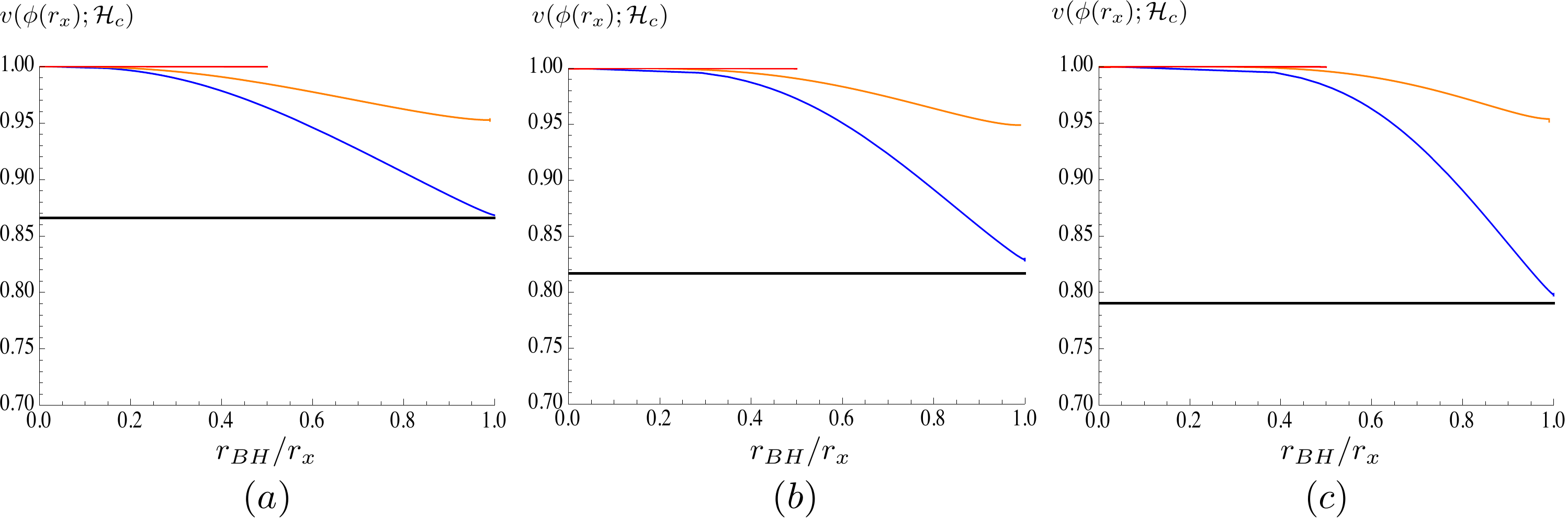}
  \caption{(a), (b), (c) are plots of the butterfly velocities
    $v(\phi(r_x);\mathcal{H}_c)$ of the boundary reconstructions of
    the bulk operators $\phi(r_x)$ in $3+1$, $4+1$, $5+1$-dimensional AdS
    Schwarzschild black hole, respectively. The black lines is the reference value $\sqrt{d/2(d-1)}$,
    with $d$ the bulk spatial dimension. %As for the red, the orange and the blue lines, if we follow a single colored curve, we vary the depth $r_x$ of the bulk operator $\phi(r_x)$, but we reconstruct $\phi(r_x)$ onto the same boundary region and plot the butterfly velocity of the reconstructed operators.  
    Each curve corresponds to the butterfly velocity of operators on a minimal surface as a function of their radial coordinates. The red and orange curves correspond to minimal surfaces with tip at radial coordinate $r_0=2r_{BH},~(1+10^{-2})r_{BH}$, respectively. The blue lines in (a),
    (b), (c) correspond to $r_0=(1+10^{-4})r_{BH}$, $(1+4\times 10^{-5})r_{BH}$,
    $(1+8\times 10^{-6})r_{BH}$, respectively. Here $r_{BH}$ is the location of the horizon.  }
  \label{fig:AdSBH}
\end{figure}

In Fig.\ref{fig:AdSBH}, we confirmed our conclusion in Sec. \ref{sec:monobv} that for spherical boundary
regions, the butterfly velocities of the reconstructed operators
decrease monotonically when its corresponding bulk operator moves to the IR. We also notice that the butterfly velocity approaches the universal IR value $\sqrt{\frac{d}{2(d-1)}}$ only when the bulk operators are
extremely close to the horizon.

It is interesting to note that due to translation symmetry, the points on different curves in Fig.\ref{fig:AdSBH} with the same $r$ coordinate can be considered as different boundary local reconstructions of the same bulk operator (Fig.\ref{mono2}). Although these different reconstruction operators all act identically in the code subspace, they act differently outside the code subspace, and have different butterfly velocities. Our numerical result indicates that the reconstructed operator in a bigger region (bounded by a minimal surface with smaller $r_0$) always has a smaller butterfly velocity. It is interesting to ask whether there is any monotonicity of butterfly velocity as a function of operator size. For general geometries this is clearly not true, since we can consider a geometry which is vacuum in IR and has matter in UV which are falling in. For certain local operators in the bulk, we can find local reconstructions in a big region that is completely in the vacuum, while smaller reconstructions have to enter the region with matter, so that the butterfly velocity is maximal for the former. It is an interesting question whether the monotonicity in operator size is correct for a restricted class of geometries, such as static geometries with translation and rotation symmetry. It is still interesting if that is true. We leave this as an open question for future works. 

%Besides, if we look at the bulk operators that locate at the same
%depth in the bulk, our numerical results indicate that among all the
%boundary reconstructions of the same bulk operator, the butterfly
%velocities of the reconstructed operators decrease monotonically when
%the size of the supported boundary region increases. Thus we make the
%following conjecture. From the quantum error correction, a bulk local
%operator $\phi_x$ can be reconstructed from different boundary
%regions. We only care about the reconstruction of the bulk operator as
%the generic operator from the spherical boundary regions $A(R)$ whose
%radius is $R$.  Due to the translational and the rotational symmetry,
%the butterfly velocities of the reconstructed operators are only
%decided by the radius $R$ of the boundary region (Fig.\ref{mono2}(a))
%and it decreases monotonically with respect to the size of the
%boundary reconstructing region $R$ if the homology condition of the
%extremal surfaces does not change (Fig.\ref{mono2}(b)). 
\begin{figure}[ht!]
  \centering
  \includegraphics[width=0.7\textwidth]{./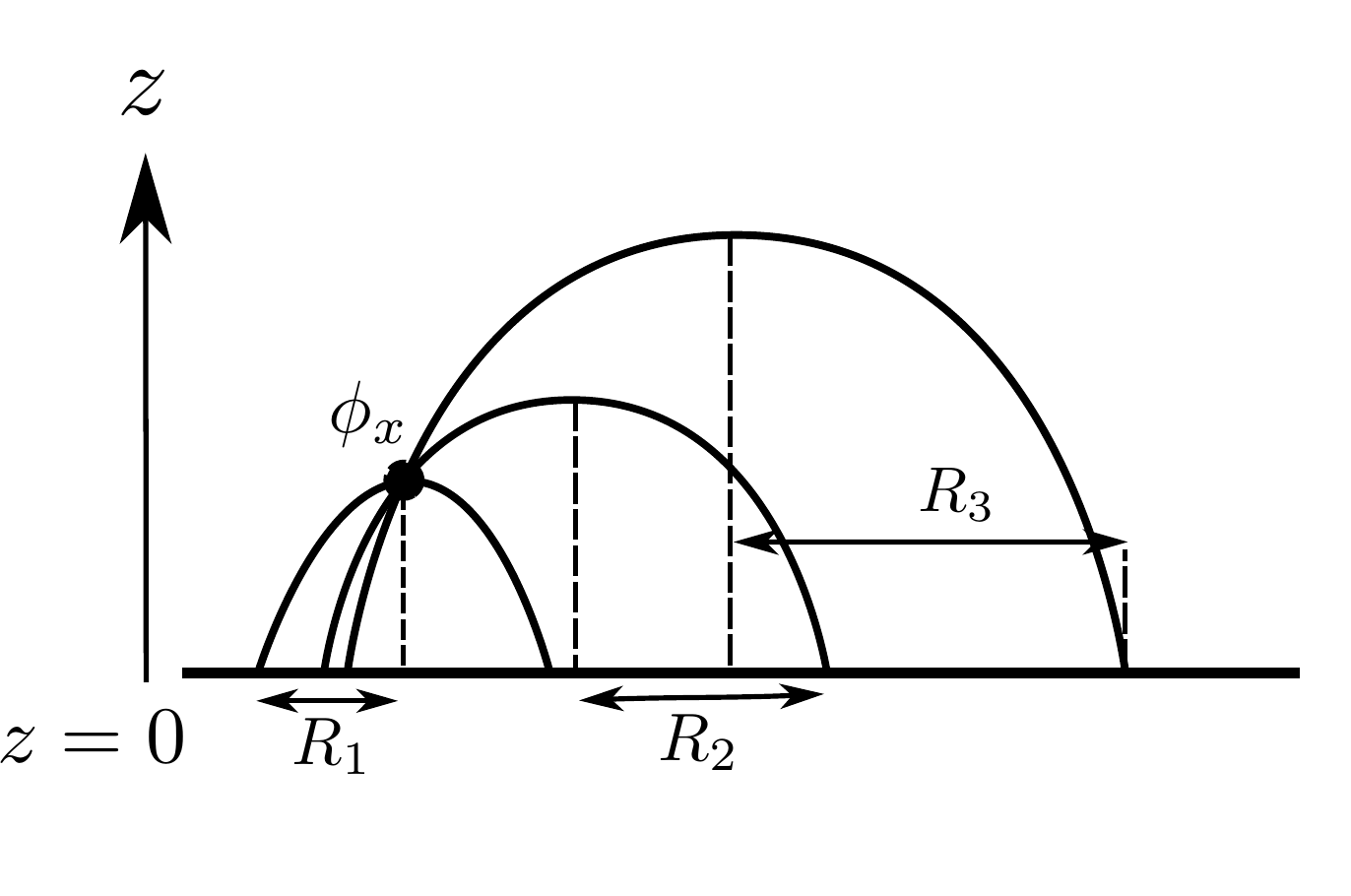}
  \caption{The same operator $\phi_x$ locally reconstructed to boundary regions $R_1$, $R_2$, $R_3$ with different size. Whether there is a monotonicity of the butterfly velocity vs size of the region is an open question. 
  }
  \label{mono2}
\end{figure}

\section{Flat space}\label{sec:flat}
Since our general framework makes no assumptions on the geometry, it
is natural to generalize our discussion to bulk geometries that are
not asymptotic AdS. Although holographic duality has not been
generalized to other geometries, the concepts such as HRT surface,
entanglement wedge and causal wedge, etc., are well-defined for any
Riemann geometry with a boundary. \footnote{In asymptotic AdS geometry
  the boundary is conformal boundary, while in general geometry we may
  consider a boundary at a finite location. For a finite boundary the
  gravity is not decoupled from the boundary theory, but this does not
  affect our discussion here in the large $N$ limit, since concepts
  like local reconstruction and entanglement wedge are all properties
  of the classical background geometry. } Therefore we can ask the
following question: If there is a boundary theory which is dual to a
given bulk geometry, in the sense that local reconstruction properties
apply to bulk low energy operators in the same way as the asymptotic
AdS case, how will this boundary theory look like?  The HRT formula
for FRW geometries with a spherical boundary has been studied in
Ref. \cite{nomura2016spacetime, nomura2016toward}, which shows that
the entanglement entropy of the boundary follows a volume law. In the
following, we will study this problem for the flat space with a
spherical boundary, from the point of view of butterfly
velocities. Following our protocol in Sec. \ref{sec:protocol}, we will
find conditions on butterfly velocities that have to be satisfied for
any possible dual of flat space gravity. In particular, we found that
the butterfly velocity is not bounded from above, which indicates that
the dual theory of flat space gravity, if exists, has to be
nonlocal. This is consistent with the high entanglement entropy found
in Ref.\cite{nomura2016spacetime, nomura2016toward}.

We consider the $d+1$-dimensional flat space with the metric
\begin{equation}
  \label{eq:1}
  ds^2 = -dt^2 + dr^2 + r^2 d\theta^2 +r^2\sin\theta^2 d\Omega_{d-2}^2 
\end{equation}
and a spherical boundary at $r=\Lambda$. The induced
metric of the boundary is
$ds^2 = -dt^2 + \Lambda^2d\theta^2 +\Lambda^2\sin\theta^2
d\Omega_{d-2}^2$.   

We focus on the butterfly velocity of the reconstruction of bulk local operators on disk regions. % boundary operators living on the spherical regions, thus we characterize the boundary region by the
A disk region centered at $\theta=0$ point is defined by $r=\Lambda$,
$\theta\in[0,\Theta]$. The disk region is a cap on the boundary sphere, and the minimal surface is a flat disk bounding the cap, as is illustrated in Fig. \ref{fig:flat}. The calculation of butterfly velocity is straightforward. Here we will list the main results and leave more detail of the explicit calculation in
Appendix.\ref{app:flat}. 

\begin{figure}[ht!]
  \centering
  \includegraphics[width=0.4\textwidth]{./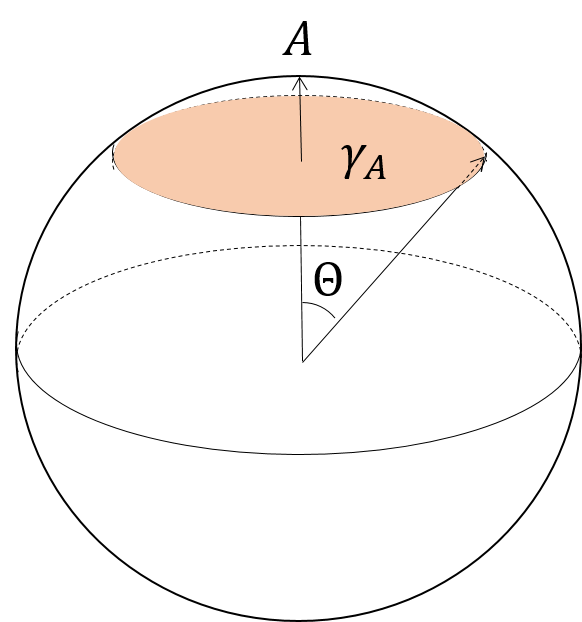}
  \caption{Illustration of the flat space with a spherical boundary. A boundary spherical cap region $A$ parameterized by angle coordinate $\Theta$ is bounded by a flat minimal surface $\gamma_A$.}
  \label{fig:flat}
\end{figure}

\begin{enumerate}
\item For operators $\phi_x$ at different location of the same minimal surface, their reconstruction $O_A[\phi_x]$ on the same boundary region all have the same butterfly velocity. This can be seen easily by applying the intuitive picture we discussed in Sec. \ref{sec:monobv}. If we expand the boundary region by increasing $\Theta$, the minimal surface stays flat so that it expands in the bulk with a constant velocity. 
\item Due to the property $1$ discussed above, the butterfly velocity $v_B[\phi_x]=v_B(\Theta)$ is only a function of the reconstructed operator size $\Theta$. $v_B(\Theta)$ has the following form:
  \begin{equation}
    \label{eq:flat}
    v(O_{A(\Theta)};\mathcal{H}_c) = \frac{c}{\sin\Theta}  ~~~ 0\leq\Theta\leq \pi
  \end{equation}
  Therefore the butterfly velocity of 
boundary operators diverge in the small size limit $\Theta\rightarrow 0$. The bulk speed of light is actually the lower bound of $v_B$, which is reached by biggest operators that occupy half of the boundary.  %, so will be the Lieb-Robinson velocity. It means the effect of the local operations spread infinitely fast.  Thus the boundary theory is non-Lorentzian.
\end{enumerate}

Therefore we have shown that the dual of flat space gravity, if
exists, must have a divergent Lieb-Robinson velocity, which requires
the Hamiltonian to be nonlocal. Conversely, we can also apply the
discussion in Sec. \ref{sec:TN} to the flat space case. If we assume
the boundary theory is a Lorentz invariant theory and is mapped to a
bulk theory on flat space (say by a tensor network), we conclude that
the bulk light cone has to have a strange shape. The speed of light in
the direction perpendicular to the boundary for a bulk point $x$ has
to vanish as the point approaches the boundary. It is interesting to
generalize this discussion to more generic geometries such as the FRW
geometry studied in Ref.\cite{nomura2016spacetime, nomura2016toward}.

\section{Conclusion and discussion}\label{sec:conclusion}

In conclusion, in this paper we provide a general definition of butterfly velocity, which characterizes the propagation velocity
of an operator measured in a given code-subspace. For large $N$ theories with a gravity dual, we show that the quantum error correction properties in local reconstruction of bulk operators closely relates the bulk causal structure and the boundary butterfly velocities. 

%Furthermore, we prove that, in the holographic theory, the butterfly velocity is related to the bulk causal structure via the quantum error correction condition.

This relation is bidirectional. In the direction from the boundary to
the bulk, we show that the Lieb-Robinson velocity of the boundary theory
constrains the location of the bulk light cone, which guarantees that a local boundary theory is mapped to a bulk theory that appears local in the code subspace. In the
direction from the bulk to the boundary, the bulk speed of light and extremal surfaces determine the butterfly velocity of boundary operators which are dual to bulk local operators. This correspondence has many consequences. For a spherical region in rotation invariant geometries, with the condition of EE and NEC we prove that the
butterfly velocity of the dual of a bulk local operator decreases monotonously as the bulk operator moves from UV to IR. When the causal wedge of the boundary region
$A$ coincides with its entanglement wedge in an asymptotic AdS
geometry (satisfying EE and NEC), the butterfly velocities of generic operators on $A$ are
exactly the speed of light. Explicit examples, including pure AdS and
AdS Schwarzschild black holes in different dimensions, are studied, which confirm our new results and is consistent with the previous results
\cite{shenker2013black, roberts2014localized, roberts2016lieb,
  mezei2016entanglement} in suitable limits. We have also
applied our result beyond the standard AdS/CFT and obtain constraints on a possible dual theory of flat space gravity. We observe that in this case the
boundary theory has to be nonlocal, with a diverging butterfly velocity for local operators.

% \XLQ{[I put reference to 29-31 to the end of Sec. 5.1, and add a
% short reference to the rest in the end of conclusion. I think this
% paragraph can be removed.] }It is worth noting that many other
% attempts have been made to understand the relation between the bulk
% dynamics and the boundary dynamics in AdS/CFT correspondence. For
% example, it has been proved that, under certain necessary
% assumptions, there is no superluminal bulk signaling between
% boundary points
% \cite{gao2000theorems,engelhardt2016gravity,woolgar1994positivity}. \ZY{Authors
% of \cite{engelhardt2016towards, engelhardt2016recovering} aim to
% reconstruct the bulk metric from the boundary correlation functions
% on the ``light-cone cut''. The ``light-cone cut'' is the spatial
% cross-sections of the boundary at infinity which is on the
% light-cone of a bulk point.} \XLQ{[Insert a discussion here about
% Gary Horowitz's work on bulk causal structure.]}  \ZY{[Done.]} Also
% in the framework of multiscale entanglement renormalization ansatz
% (MERA) \cite{vidal2007entanglement}, \XLQ{[is their result more
% about MERA or EHM? give references to MERA/EHM here. Is it a
% conjecture or argument/proof? ]} \ZY{[It is mainly about MERA]}, it
% was \ZY{argued} that the light cone in the bulk is related to the
% ``Lieb-Robinson type'' velocity on the
% boundary\cite{kim2016entanglement}.

There are many open questions that shall be studied in future works. In the black hole geometry we observe that different reconstruction of the same bulk operator has different butterfly velocitites, and the reconstruction on a smaller region corresponds to a faster butterfly velocity. It is natural to search for more general constraints on the butterfly velocity as a function of the size of the boundary operator. Such relation may exist in general boundary theories, or may provide further conditions that a holographic theory has to satisfy.

%We
%conjecture in Sec.\ref{example2}, that among all the reconstructions
%onto the spherical boundary regions of the same local bulk operator,
%the bigger the boundary region is, the slower the butterfly
%velocity. Although numerical evidences have been provided in this
%paper. Further analytical proof are also important.

Our discussion has focused on boundary operators that are local reconstruction of bulk local operators. If we generalize the discussion to more generic boundary operators that are dual to multi-point operators in the bulk, can we may obtain more general relation between the butterfly velocity of the boundary operator and the location of the dual bulk operator. For example, in the black hole geometry in $d+1>3$, it is reasonable to believe there is only one boundary operator with the slowest butterfly velocity, which is dual to the local operator at the tip. If a given boundary operator has a butterfly velocity that is above the minimal value but smaller than speed of light, we know it cannot contain the faster operators living on the UV part of the minimal surface. In general, for a given boundary operator, its butterfly velocity may limit its dual operator to a subregion of the entanglement wedge.

%Another question is 
%about how to sharpen the concept of sub-region
%duality if we know the butterfly velocities of boundary operators.
%Given a boundary theory that has the holographic bulk dual and given a
%generic boundary operator $O$, if we map this operator back to the
%bulk in the code-subspace, obviously the bulk operator is within the
%entanglement wedge of the boundary. However, since we can decide the
%butterfly velocities of the boundary reconstructions of all the local
%bulk operators, how we can make further constraints of the bulk
%supported region of $O$ given $O$'s butterfly velocity.

The correspondence between boundary butterfly velocities and bulk causal structure may also provide a tool to determine bulk dual geometry for a boundary theory. In the context of holographic tensor networks, there is no a priori constraints on the bulk geometry. In principle, one can determine the bulk light cone if butterfly velocities of all boundary operators are known. It is a nontrivial requirement that the light cone is consistent with that of a Riemannian space-time geometry. This provides a possible way to constraint the choice of geometry for defining tensor newtork holographic mappings for a given boundary theory. For example one may define a holographic mapping with a flat space tensor network and apply it to a CFT ground state on the boundary. Our results suggest that the bulk low energy theory will not be Lorentz invariant, which suggests that the flat space is not the right choice for the tensor network representation of a CFT ground state.

Relation between the boundary theory and bulk causal structure has also been investigated in different approaches\cite{engelhardt2016towards, engelhardt2016recovering,kim2016entanglement}. It is interesting to investigate the relation of these approaches with our results. 

%Finally, in the context of the holographic tensor network, we are
%curious about how to generalize this framework to the case where the
%bulk tensor network geometry is non-static. Also, in all of the
%previous work on the holographic tensor network, there is no
%constraint on the bulk geometry. If we take into account of the
%butterfly velocities on the boundary, which is decided by the boundary
%code-subspace and the boundary Hamiltonian, whether or not we can make
%further constrains on the bulk geometry. For example, if we ask the
%bulk geometry to be Riemannien, whether the constrains are related to
%the equation of motion of the bulk geometry?

\noindent{\bf Acknowledgement.} We acknowledge helpful discussions with
Patrick Hayden, Isaac H. Kim, Aitor Lewkowycz, John Preskill, Stephen H Shenker and Michael Walter. This work is supported by the National Science Foundation through
the grant No. DMR-1151786 (XLQ and ZY) and the David and
Lucile Packard foundation (XLQ).

\appendix 
\section{Precise definition of butterfly velocity}\label{SecDef}
The definition of butterfly velocity in Sec.\ref{sec:def} involves the
limit $\Delta t\rightarrow 0$. We will make it precise using the
$\epsilon-\delta$ definition. 

Precisely, the butterfly velocity $v(O_A;\mathcal{H}_c)$ can be
defined given a boundary code subspace $\mathcal{H}_c$ and a generic
boundary operator $O_A$ acting in region $A$. The definition is that
for all $\epsilon >0$, there exists $\delta >0$, such that as long as
$\Delta t < \delta$, the following two statements are satisfied.
\begin{enumerate}
\item If $ D > \left(v(O_A;\mathcal{H}_c)+\epsilon\right) \Delta t$, then
  $\forall ~ B\in \{R|d(R,A)=D\}$, and $\forall ~ O_B $ supported in
  region $B$,
  $\forall ~|\psi_i\rangle, |\psi_j\rangle \in \mathcal{H}_c$,
  \begin{equation}
       \bra{\psi_i} [O_A,O_B(\Delta t)]\ket{\psi_j} =0
  \end{equation}
\item If $ D <\left( v(O_A;\mathcal{H}_c)-\epsilon \right) \Delta t$, then
  $\exists ~ B\in \{R|d(R,A)=D\}$, and $\exists ~ O_B $ supported in
  region $B$,
  $\exists ~|\psi_i\rangle, |\psi_j\rangle \in \mathcal{H}_c$,
  \begin{equation}
       \bra{\psi_i} [O_A,O_B(\Delta t)]\ket{\psi_j} \neq 0
  \end{equation}
\end{enumerate}

\section{Monotonicity of butterfly velocity of operators in the
  same region}\label{app:mono1}
In this section, we will prove that among all the bulk operators
located at the extremal surfaces of a spherical boundary region, the
deeper the bulk operator, the smaller the butterfly velocity of its
reconstructed operator. Our proof applies to geometry of any
dimensions with rotation symmetry and a rotation invariant boundary region, as long as it satisfies the Einstein equation(EE) and the
null energy condition(NEC).  In
Fig.\ref{mono1} we illustrate the setup for 1+1 dimensional
boundary theory with 2+1 dimensional bulk dual.

The tool we use in our proof is the null expansion in the general
relativity, which has been used to prove that the entanglement wedge
of a boundary region contains its the causal wedge if the geometry
satisfies EE and NEC\cite{wall2012maximin, hubeny2013global}. We will
introduce the important notions in this section. More details about
the null expansion and Raychaudhuri equation can be found in 
textbooks about general relativity.

Let us start with boundary region $A$, bounded by the extremal surface
$\gamma_A$. Then we shoot light-like geodesics 
perpendicular to $\gamma_A$ pointing towards the boundary
(Fig.\ref{fig:apmono}(a)). We define $U^\mu\equiv dx^\mu/d\tau$ to be
the null vector along the null surface pointing towards the
boundary. Since $U^{\mu}$ is null, we have the freedom to do affine
transformation on $\tau$. We scale $\tau$ so that when $\tau $ is a
constant, it specifies a co-dimensional 2 surface that is
perpendicular to $U^\mu$ and when $\tau=0$, the co-dimensional 2
surface is exactly $\gamma_A$. Thus we have defined a one parameter
family of co-dimensional 2 surfaces $\Gamma(\tau)$, and
$h_{\mu\nu}(\tau)$ is the induced metric on $\Gamma(\tau)$.  The null expansion $\theta$ is defined as
\begin{equation}
   \theta[\Gamma(\tau)] = h^{\mu\nu}(\tau) \nabla_\mu U_\nu  
\end{equation}
The null expansion satisfies the  Raychaudhuri equation
\begin{equation}
  \frac{d\theta[\Gamma(\tau)]}{d\tau} = -\frac{1}{d-1} \theta^2 -\sigma_{\mu\nu}\sigma^{\mu\nu} - R_{\mu\nu} U^\mu U^\nu
\end{equation}
where $\sigma_{\mu\nu}$ is the shear part of the extrinsic curvature. The important thing is that when the geometry satisfies EE, 
\begin{equation}
  R_{\mu\nu} U^\mu U^\nu = \left(8\pi G \left(T_{\mu\nu} - \frac{1}{d-1} T g_{\mu\nu}\right) + \frac{2\Lambda g_{\mu\nu}}{d-1}\right) U^\mu U^\nu = 8\pi G T_{\mu\nu}U^\mu U^\nu
\end{equation}
and NEC means $T_{\mu\nu}U^\mu U^\nu\geq 0$ if $U^\mu$ is null. Since
$\sigma_{\mu\nu}\sigma^{\mu\nu}$ is also non-negative, we conclude that
$\theta[\Gamma(\tau)]$, the null expansion, decreases monotonically
along $\tau$.
 
Now we will prove two lemmas first from which the monotonicity result can be
deduced straightforwardly. As we have mentioned $\Gamma(0)=\gamma_A$,
the extremal surface of the boundary region $A$. In our convention, $\tau>0$ ($\tau<0$) means the co-dimensional 2 surfaces are moving
towards (away from) the boundary (Fig.\ref{fig:apmono}(a)). 
\begin{figure}[ht!]
  \centering
  \includegraphics[width=0.9\textwidth]{./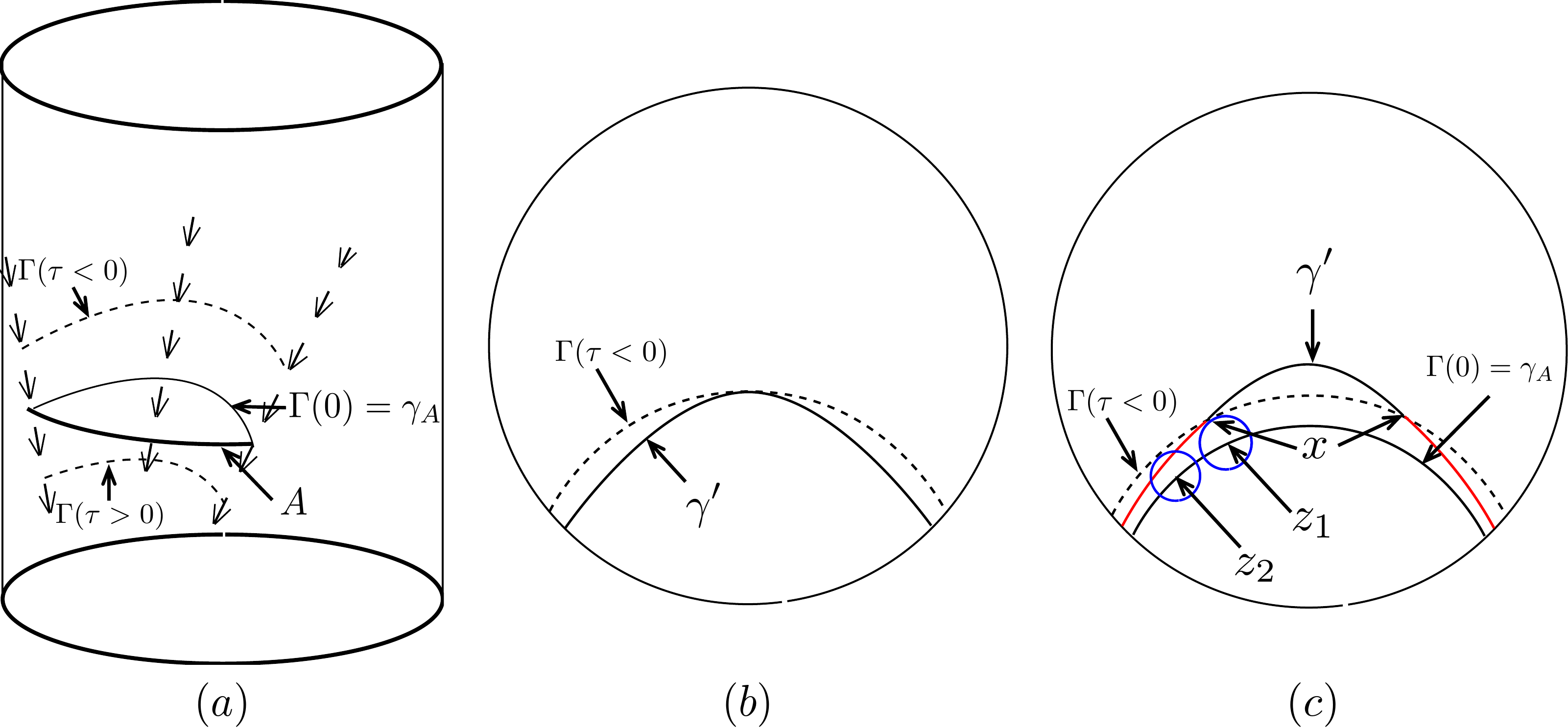}
  \caption{(a) Illustration of a null congruence of
    $\gamma_A$, the extremal surface bounding a boundary region $A$. The
    three straight lines with arrows are null geodesics emitted
    from and perpendicular to $\gamma_A$. $\Gamma(\tau>0)$ and $\Gamma(\tau<0)$
    are two co-dimensinal 2 surfaces belonging to the one parameter
    family $\Gamma(\tau)$ (see text). (b) Illustration of the setup in Lemma
    \ref{lemma1}. $\chi'$ is a geodesic surface that is tangential to $\Gamma(\tau<0)$ in figure (a). (For simplicity, only spatial directions are drawn, but the two surfaces are not required to be in a certain constant time surface.) (c) Illustration of the setup in Lemma \ref{lemma2}. $x$ is the point where $\gamma^\prime$ intersects
    with $\Gamma(\tau<0)$. $z_1$ and $z_2$ are two bulk points on
    $\gamma_A$. The blue circles are the light cones of $z_1$ and
    $z_2$. $\gamma'$ is tangential to the light cone of $z_1$ and intersects with that of $z_2$. }
  \label{fig:apmono}
\end{figure}

\begin{lemma}
  If an extremal surface $\gamma^\prime$ is tangent to the
surface $\Gamma(\tau<0)$ at its tip(Fig.\ref{fig:apmono}(b)), it will not
  intersect with the null congruence $\Gamma(\tau)$ again.\label{lemma1}
\end{lemma}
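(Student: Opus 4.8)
The plan is to argue by contradiction, using as the only analytic input the monotonicity of the null expansion along the congruence established above. Because $\Gamma(0)=\gamma_A$ is extremal we have $\theta[\Gamma(0)]=0$, and since the Raychaudhuri equation together with EE and NEC (through $R_{\mu\nu}U^\mu U^\nu=8\pi G\,T_{\mu\nu}U^\mu U^\nu\ge 0$ and $\sigma_{\mu\nu}\sigma^{\mu\nu}\ge 0$) makes $\theta[\Gamma(\tau)]$ non-increasing in $\tau$, every cut that has been pushed away from the boundary has non-negative expansion, $\theta[\Gamma(\tau)]\ge 0$ for all $\tau\le 0$. On the other hand $\gamma'$ is extremal, so its null expansion vanishes in every normal null direction, and at any point where it is tangent to a cut $\Gamma(\tau)$ it shares the tangent plane --- hence the null normal frame --- with that cut, so there $\theta_U[\gamma']=0$. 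The proof consists in playing these two facts against each other through a local, maximum-principle comparison at a point of tangency.

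Concretely, I would first note that, because the tangency point $p$ is the tip, i.e.\ the deepest point, of $\gamma'$, a neighborhood of $p$ in $\gamma'$ lies weakly on the boundary-side of the null hypersurface $N\equiv\bigcup_\tau\Gamma(\tau)$, and $\tau_0$ is the smallest value of $\tau$ for which $\Gamma(\tau)$ meets $\gamma'$. Suppose for contradiction that $\gamma'$ meets $N$ again; then the set $\{\tau:\Gamma(\tau)\cap\gamma'\neq\emptyset\}$ contains values larger than $\tau_0$, and I let $\tau^\ast$ be its supremum, which one checks is $\le 0$ (using that near the tip $\gamma'$ lies outside $N$ and that far from the tip $\gamma'$ is anchored to its own boundary region). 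At the touch with $\Gamma(\tau^\ast)$ the surface $\gamma'$ necessarily lies on the deep side, so $\gamma'$ is tangent to $\Gamma(\tau^\ast)$ \emph{from the inside}, and comparing second fundamental forms gives $\theta_U[\gamma']\ge\theta_U[\Gamma(\tau^\ast)]$ there. Together with $\theta_U[\gamma']=0$ and $\theta_U[\Gamma(\tau^\ast)]\ge 0$ this forces the inequality to saturate, and the strong maximum principle then makes $\gamma'$ coincide with $\Gamma(\tau^\ast)$ in a neighborhood --- which is absurd, since a cut with $\tau^\ast\neq 0$ is not a minimal surface (its expansion along the other null normal generically fails to vanish) while $\gamma'$ is extremal. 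The leftover case $\tau^\ast=0$ is a tangency of the two extremal surfaces $\gamma'$ and $\gamma_A$, forcing them to coincide and thus contradicting that they bound different boundary regions; multiple-touch edge cases at $\tau=\tau_0$ are disposed of the same way.

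I expect the maximum-principle step, and in particular the borderline case where both expansions vanish at the touch point, to be the genuine obstacle. This is precisely the technical heart of the arguments of Wall \cite{wall2012maximin} and Hubeny--Rangamani \cite{hubeny2013global} (see also Headrick et al.\ \cite{headrick2014causality}) that the entanglement wedge contains the causal wedge, and I would adapt their treatment: either invoke the strong maximum principle in the analytic case, or, more robustly, perturb the congruence so that the focusing inequality becomes strict through the shear and matter terms and then take a limit. A secondary subtlety, which matters because we want the statement in the covariant, time-dependent setting and not only for static geometries, is that $\gamma'$ and the cuts $\Gamma(\tau)$ need not share a constant-time slice, so ``boundary-side'', ``deep side'' and the comparison of extrinsic curvatures must all be phrased intrinsically along the achronal null hypersurface $N$; the null-congruence language set up in this appendix is already tailored to doing exactly this.
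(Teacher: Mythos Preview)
Your proposal uses the same analytic ingredients as the paper --- Raychaudhuri focusing gives $\theta[\Gamma(\tau)]\ge 0$ for $\tau\le 0$, and extremality gives $\theta[\gamma']=0$ --- but the global contradiction step is organized differently. The paper does \emph{not} slide along the original congruence to a supremal cut $\Gamma(\tau^\ast)$. Instead, assuming $\gamma'$ pokes through $\Gamma(\tau_0)$ at some bump, it \emph{shrinks the boundary region $A$}, which drags the whole congruence (and in particular the cut at the same negative parameter) continuously towards the boundary, until the deformed cut $\tilde{\Gamma}$ becomes tangent to $\gamma'$ at the top of the bump. At that tangency $\tilde{\Gamma}$ lies on the boundary side of $\gamma'$, so the comparison lemma gives $\theta[\tilde{\Gamma}]\le\theta[\gamma']=0$, contradicting $\theta[\tilde{\Gamma}]\ge 0$. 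The one-parameter deformation of $A$ manufactures an honest codimension-two tangency by construction.

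Your sliding-in-$\tau$ argument has a gap exactly at that point. Taking $\tau^\ast=\sup\{\tau:\Gamma(\tau)\cap\gamma'\neq\emptyset\}$ and letting $q$ be a point of $\gamma'\cap\Gamma(\tau^\ast)$, you assert that $\gamma'$ is tangent to $\Gamma(\tau^\ast)$ at $q$ with $\gamma'$ on the deep side. But $\tau^\ast$ being maximal only constrains $\gamma'\cap N$ inside $N$; it does not force $T_q\gamma'\subset T_q\Gamma(\tau^\ast)$. If $\gamma'$ meets $N$ transversally at $q$ (which is the generic situation for a spacelike codimension-two surface hitting a null hypersurface), then $T_q\gamma'$ has a direction leaving $N$ altogether, and $\gamma'$ is not tangent to the cut --- so the second-fundamental-form comparison you invoke is unavailable. ``$\gamma'$ lies on the deep side of $\Gamma(\tau^\ast)$'' is also not well-posed as stated, since $\Gamma(\tau^\ast)$ is codimension two in spacetime. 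With the rotational symmetry assumed in this appendix one can likely repair this by working in the quotient and arguing more carefully, but as written the step does not go through; the paper's deformation-of-$A$ trick is precisely what sidesteps the issue.

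On the borderline case you are more careful than the paper: you correctly flag that the comparison only gives $\theta[\tilde\Gamma]\le 0$ and that equality must be handled by the strong maximum principle (or a perturbation), whereas the paper writes a strict inequality without comment. That observation is a genuine improvement in rigor, and the references you cite are the right ones.
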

\begin{proof}
  {One useful result that we will refer to has been proven in
    \cite{wall2012maximin, hubeny2013global}.  If two co-dimensional 2
    surfaces $N_1$ and $N_2$ are tangent at $x$, and if the null
    expansions of the null congruence emitted from $N_1$ and $N_2$
    satisfy $\theta[N_1]\geq\theta[N_2]$, then in any sufficiently
    small neighborhood of $x$, $N_2$ is contained by the space-time
    region separated by the null congruence of $N_1$ towards the
    direction where the null congruence is pointing.}  Because
  $\Gamma(0)$ is the extremal surface of the boundary region $A$, the
  null expansion is $\theta[\Gamma(0)]=0$.  According to the
  Raychaudhuri equation, the null expansion decreases monotonically
  with respect to $\tau$. Thus the null expansion of $\Gamma(\tau<0)$ is 
  $\theta[\Gamma(\tau<0)]\geq 0$.  Because $\gamma^\prime$ is the
  extremal surface, $\theta[\gamma^\prime]=0$. Since
  $\gamma^\prime$ is tangent to $\Gamma(\tau<0)$ at its tip, and
  $\theta[\Gamma(\tau<0)]\geq \theta[\gamma^\prime]$, we conclude that for a
  sufficiently small region near the tip, the space-time region that
  is separated by $\Gamma(\tau)$ and contains the boundary $A$
  includes $\gamma^\prime$.

Now, we need to prove that $\gamma^\prime$ does not get out of
$\Gamma(\tau<0)$ when we are moving away from the tip. In other words,
the situation shown in Fig.\ref{fig:apmono2}(a) does not
happen. Assume that Fig.\ref{fig:apmono2}(a) does happen, then we
shrink the size of the boundary region $A$, so that $\Gamma(\tau<0)$
deforms continuously into $\tilde{\Gamma}$ and is tangent to
$\gamma^\prime$ at one point. Since $\tilde{\Gamma}$ is tangent to and
inside $\gamma'$, we have
$\theta[\tilde{\Gamma}]<\theta[\gamma^\prime]=0$. This is a
contradiction since $\tilde{\Gamma}$ is the causal future of the
extremal surface of some boundary region smaller than $A$, so that
$\theta[\tilde{\Gamma}]\geq 0$ according to the Raychaudhuri
equation. Therefore we conclude that Fig.\ref{fig:apmono2}(a) does not
happen.
\end{proof}

\begin{lemma}
For a geodesic surface
$\gamma^\prime$ which intersects with $\Gamma(\tau<0)$ at a point
$x$, then the points on $\gamma^\prime$ that are closer to the
boundary (red part in Fig.\ref{fig:apmono}(c)) than $x$ will not
intersect with $\Gamma(\tau)$.\label{lemma2}
\end{lemma}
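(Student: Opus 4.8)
The plan is to argue by contradiction, closely following the proof of Lemma~\ref{lemma1}. The inputs are the same: the Raychaudhuri equation with EE and NEC, which make $\theta[\Gamma(\tau)]$ non-increasing in $\tau$ with $\theta[\Gamma(0)]=0$, hence $\theta[\Gamma(\tau)]\geq 0$ for $\tau\leq 0$, while $\theta[\gamma^\prime]=0$ since $\gamma^\prime$ is extremal; together with the local comparison statement of \cite{wall2012maximin, hubeny2013global, headrick2014causality} recalled in the proof of Lemma~\ref{lemma1}. Write $\Gamma(\tau_0)$, $\tau_0<0$, for the leaf through $x$, $\mathcal{N}=\bigcup_\tau\Gamma(\tau)$ for the congruence, and $\gamma^\prime_+$ for the part of $\gamma^\prime$ closer to the boundary than $x$ (the red portion of Fig.~\ref{fig:apmono}(c)); the goal is to show $\gamma^\prime_+\cap\mathcal{N}=\emptyset$.

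First I would fix the local picture at $x$: $\gamma^\prime$ crosses $\mathcal{N}$ transversally there, and applying the comparison statement in the tangent limit with $\theta[\gamma^\prime]=0\leq\theta[\Gamma(\tau_0)]$ shows that on the $\gamma^\prime_-$ side (deeper than $x$) $\gamma^\prime$ dips to the bulk side of $\mathcal{N}$, whereas $\gamma^\prime_+$ emerges on the boundary side. Suppose, for contradiction, that $\gamma^\prime_+$ nonetheless returns to $\mathcal{N}$ at a point $y$ closer to the boundary than $x$; then near $y$ the surface $\gamma^\prime_+$ again pokes onto the bulk side of $\mathcal{N}$, producing a compact piece of $\gamma^\prime$ that sits on the bulk side of $\mathcal{N}$ and lies entirely closer to the boundary than $x$. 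Now I would deform by shrinking the anchoring region $A$ continuously, exactly as in the proof of Lemma~\ref{lemma1}, so that $\gamma_A=\Gamma(0)$ and with it all of $\mathcal{N}$ retreat continuously towards the boundary. A deep bulk-side piece of $\gamma^\prime$, such as the one on the $\gamma^\prime_-$ side of $x$, is simply moved away from and never produces a tangency --- this is why the statement concerns $\gamma^\prime_+$ and not all of $\gamma^\prime$ --- but the shallow bulk-side piece near $y$ is eventually cleared by the retreating congruence, so at the last instant of contact some leaf $\tilde\Gamma$ of the shrunk congruence is tangent to $\gamma^\prime$ at an interior point $p$ with $\gamma^\prime$ on the bulk side of $\tilde\Gamma$, and with $p$ deeper than the instantaneous $\Gamma(0)$, i.e.\ $\tilde\Gamma$ a leaf with $\tau<0$. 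Then $\theta[\tilde\Gamma]\geq 0$ by Raychaudhuri, while $\gamma^\prime$ lies opposite to the direction in which $\tilde\Gamma$'s congruence points, so the comparison statement forces $\theta[\tilde\Gamma]<\theta[\gamma^\prime]=0$ --- a contradiction. (Should the return point $y$ be a tangency rather than a transverse crossing, one shrinks $A$ just enough to migrate that tangency to the tip of $\gamma^\prime$ and appeals to Lemma~\ref{lemma1} directly.)

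The step I expect to be the main obstacle is making the ``last instant of contact'' argument precise: one must show that, as $A$ shrinks continuously, the signed separation of the shallow bulk-side piece of $\gamma^\prime$ from the moving leaves of $\mathcal{N}$ varies so that a genuine one-sided tangency is produced (the two surfaces cannot slip apart without ever being tangent) at a point that is still deeper than the instantaneous $\Gamma(0)$, so that the tangent leaf really carries $\tau<0$ and hence $\theta\geq0$. As in Lemma~\ref{lemma1}, this should be handled by localizing to a small neighbourhood of the contact point where the comparison statement of \cite{wall2012maximin, hubeny2013global} applies, and the rotational symmetry imposed throughout this section is what guarantees that the relevant extremal/tip points lie on the symmetry axis, which is needed both for the first-contact argument and for the reduction to Lemma~\ref{lemma1} in the tangential case.
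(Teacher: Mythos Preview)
Your proposal is correct and follows essentially the same route as the paper: assume the boundary-side portion of $\gamma'$ re-enters the congruence, shrink $A$ so that the retreating $\Gamma(\tau<0)$ leaves become tangent to $\gamma'$ from the inside, and derive a contradiction with $\theta[\tilde\Gamma]\geq 0$ via the comparison statement of \cite{wall2012maximin, hubeny2013global}. The paper's proof is in fact a one-line reference to Lemma~\ref{lemma1} (``exclude the situation in Fig.~\ref{fig:apmono2}(b) by the same reasoning''), so your version is a more detailed unpacking of the same argument rather than a different approach.
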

\begin{proof}
  The proof of this lemma only requires to exclude the situation in
  Fig.\ref{fig:apmono2}(b), which can be ruled out following the same reasoning as the proof of Lemma \ref{lemma1}. 
\end{proof}
\begin{figure}[ht!]
  \centering
  \includegraphics[width=0.8\textwidth]{./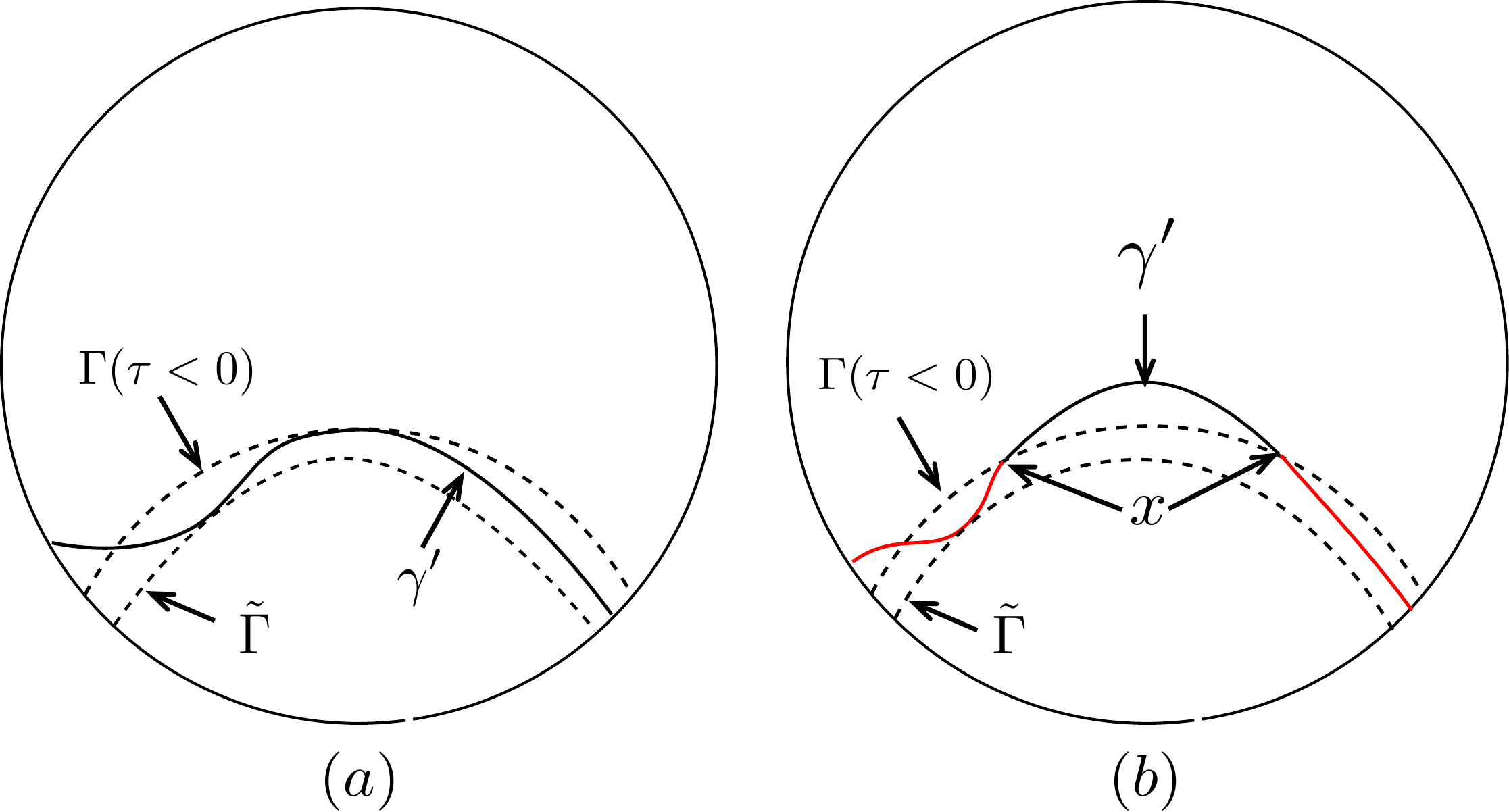}
  \caption{Two hypothetical situations that do not occur for geometries satisfying EE and NEC. $\Gamma(\tau<0)$ is the causal future of $\gamma_A$,
    the extremal surface of the boundary region $A$ (see. Fig. \ref{fig:apmono} (a)). (a) $\gamma^\prime$ is
an extremal surface which is tangential to $\Gamma(\tau<0)$. If $\gamma'$ intersects with $\Gamma(\tau<0)$ again, one can shrink the boundary region bounding $\gamma'$ and find another extremal surface $\tilde{\Gamma}$ that is tangential to $\Gamma(\tau<0)$ and is between $\Gamma(\tau<0)$ and the boundary. (b) The same argument applies to a $\gamma'$ that intersects with $\Gamma(\tau<0)$ at point $x$.  }
  \label{fig:apmono2}
\end{figure}

These two lemmas leads to the monotonicity result. { In
  Fig.\ref{fig:apmono}(c), $z_1$ and $z_2$ are two points on
  $\Gamma(0)=\gamma_A$. (It should be noted that $\gamma_A$ and $\gamma'$ are at
  different boundary time, although for the purpose of illustration we have only drawn the spatial directions.)  By construction, $\gamma^\prime$ is the minimal surface that
  is tangent to the light cone of $z_1$ (the blue circle around
  $z_1$), thus the distance between $\gamma^\prime$ and $\gamma_A$ on
  the boundary decides the butterfly velocity of $z_1$. To decide the butterfly velocity of $z_2$, we notice that $\gamma'$ intersects with the lightcone of $z_2$. Thus we must increase the size of the boundary region enclosed by $\gamma^\prime$ to find the minimal surface that is tangent to the light cone of $z_2$. Thus the butterfly velocity of $z_2$ is bigger than that of $z_1$. }

%Thus we need to expand $\chi^\prime$, so that the domain of
%dependence of $\chi^\prime$ includes more points of $\Gamma(0)$ that
%are closer to the boundary. Based on out protocol, $\chi^\prime$ lies
%on the same time slice on the boundary, thus the bigger the size of
%$\chi^\prime$, the bigger the butterfly velocity. This proves our
%monotonicity result.

\section{Proof of the saturation of butterfly velocity when the causal wedge coincides with the entanglement wedge}\label{app:proof}
\label{app:vBeqc}

In this section, the assumptions we make on the bulk geometry are that
1) it is asymptotic AdS; 2) it satisfies the Einstein equation(EE),
and null energy condition(NEC). We prove that when the causal
  wedge and the entanglement wedge of a boundary region coincide, the
  butterfly velocity of generic operators supported on this boundary region is
  $c$.

% \ZY{To make the discussion concrete, we first define the ``distance
%   function'' $\mathcal{F}$.  $\mathcal{F}(x,y)$ is a scalar function
%   between the points $x$, which is located on a causal surface
%   $\gamma$, and $y$, that is space-like separated from $x$. If $y$ is
%   outside the causal wedge whose causal surface is $\gamma$, then
%   $\mathcal{F}(x,y)$ is the proper distance between $x$ and
%   $y$. Otherwise, $\mathcal{F}(x,y)$ is negative the proper distance
%   between $x$ and $y$. Then, we will show that in the limit
%   $\Delta t\rightarrow 0$, the distance between any point on the
%   causal surface $\gamma_{u\Delta t}$ and its closest neighbor on the
%   entanglement surface $\chi_{u\Delta t}$ is of $O(\Delta t^2)$, while
%   the distance between any point on causal surfaces
%   $\gamma_{u\Delta t}$ and its closest neighbor on
%   $\gamma_{v\Delta t}$, $u\neq v$ is of $O(\Delta t)$.} \XLQ{[what is
%   $\gamma_{v\Delta t}$?]} 

\begin{figure}[ht!]
  \centering
  \includegraphics[width=0.45\textwidth]{./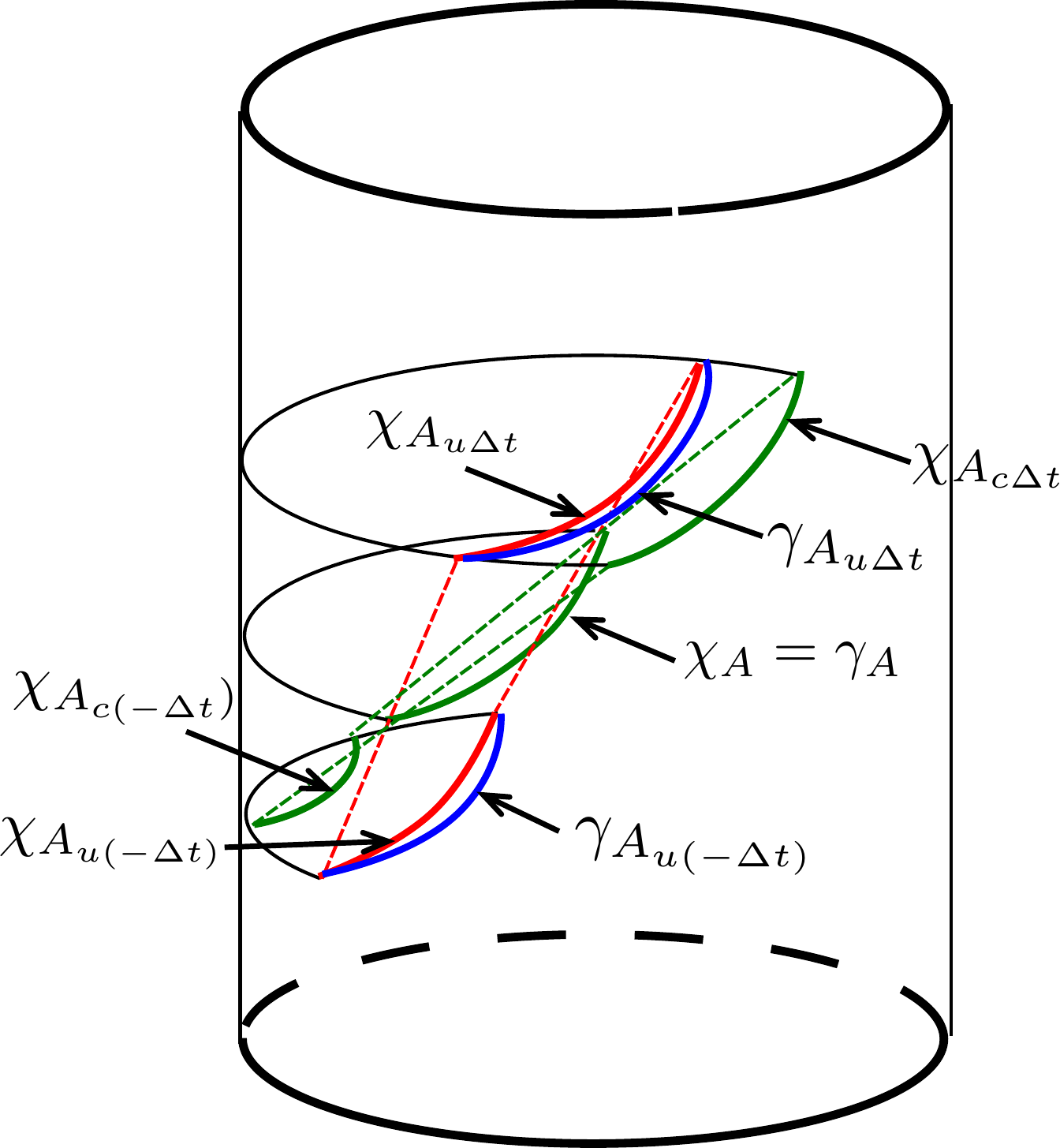}
  \caption{The setup in Sec. \ref{app:vBeqc}. The green curves are the causal surfaces of three boundary regions, $A$, its expansion $A_{c\Delta t}$ and contraction $A_{c(-\Delta t)}$, respectively.  The causal surface $\chi_A$
    of the boundary region $A$ coincides with its entanglement surface
    $\gamma_A$. The two red curves are the causal surfaces of
    $A_{u\Delta t}$, $A_{u(-\Delta t)}$ for $u<c$, respectively. The two blue
    curves are the entanglement surfaces of $A_{u\Delta t}$ and
    $A_{u(-\Delta t)}$, respectively. }
  \label{fig:causalwedgeproof}
\end{figure}

In Fig.\ref{fig:causalwedgeproof}, the green curve in the middle is
$\chi_A$, the causal surface of the boundary region $A$, which
coincides with $\gamma_A$, the entanglement surface of the boundary
region $A$. The other two green curves are the causal surfaces of
$A_{c\Delta t}$ and $A_{c(-\Delta t)}$, the expansion of boundary region $A$ by speed of light $c$ to time $\Delta t$.  The two red curves are the
causal surfaces of $A_{u\Delta t}$ and $A_{u(-\Delta t)}$ with a velocity 
$u<c$, and the two blue curves are the entanglement surfaces of
$A_{u\Delta t}$ and $A_{u(-\Delta t)}$. It has been proven that the
entanglement surfaces lie outside or coincide with the causal
surfaces\cite{wall2012maximin, hubeny2013global} for the asymptotic
AdS geometry that satisfies EE and NEC. Thus the entanglement surfaces
of the series of the boundary regions $A_{u\cdot \tau}$
$\tau\in[-\Delta t, \Delta t]/\{0\}$, will not penetrate into their
causal surfaces and at least at $\tau=0$, the two surfaces
coincide. {Now, we pick an arbitrary curve
$x(\tau),\tau\in[-\Delta t, \Delta t]$ such that $x(\tau)$ lives on
the causal surface $\chi_{A_{u\cdot \tau}}$. Correspondingly,
$y(\tau)$ is the point on the entanglement surface
$\gamma_{A_{u\cdot \tau}}$ that is closest to $x(\tau)$.  Because the
distance $d(x(\tau),y(\tau))$ is 0 at $\tau=0$ and $y(\tau)$ does not
cross $x(\tau)$ when $\tau\in[-\Delta t, \Delta t]$, the curves are tangential to each other at $\tau=0$, and we have }
\begin{equation}
  d(x(\tau),y(\tau)) = O(\tau^2),~~~ \tau\in[-\Delta t, \Delta t]
\end{equation}

{On the other hand, because $u<c$, for $\tau\in[-\Delta t,\Delta t]$,
$\chi_{u\cdot \tau}$ and $\chi_{c\cdot \tau}$ cross each other at
$\tau = 0$. Thus if we pick an arbitrary curve
$x(\tau),\tau\in[-\Delta t, \Delta t]$ such that $x(\tau)$ lives on
the causal surface $\chi_{A_{u\cdot \tau}}$, and $y(\tau)$ being the closest point to $x(\tau)$ living on  the causal surface $\chi_{A_{c\cdot \tau}}$, the distance between $x(\tau)$ and $y(\tau)$ is}
\begin{equation}
  d(x(\tau),y(\tau)) = O((c-u)\tau),~~~ \tau\in[-\Delta t, \Delta t]
\end{equation}

Finally, we put these ingredients together. $\chi_A=\gamma_A$ means if
$u<c$, the causal wedge $C_{A_{u\Delta t}}$ whose causal surface is
$\chi_{A_{u\Delta t}}$ contains no part of $\gamma_A$, because the
distance between any point on $\chi_{A_{u\Delta t}}$ and $\chi_{A_{c\Delta t}}$ is of order $O((c-u)\Delta t)$, while
the distance between any point on $\chi_{A_{u\Delta t}}$ and $\gamma_{A_{u\Delta t}}$ is of $O(\Delta t^2)$. Therefore we conclude that the butterfly velocity $v$ of a local operator at $\gamma_A$ must satisfy $v>u$ for any $u<c$. In other words, we must have $v=c$. In summary we conclude that the butterfly velocity of generic operators,
supported on the boundary region of which the causal wedge coincide
with the entanglement wedge is $c$.

\section{Flat space holography}\label{app:flat} 

We start from the $d+1$ dimensional flat space metric.
$ds^2 = -dt^2+dr^2+r^2d\theta^2+r^2\sin\theta^2d\Omega^2_{d-2}$.
Because it is symmetric in time translation, we only need to focus on
a single time slice to study the extremal surfaces. Besides, we will focus on the butterfly velocity of the boundary operators living
on spherical regions, so that the boundary region is fully characterized
by the size of the spherical cap, determined by the parameter $\Theta$. 

We set the boundary to be at $r=\Lambda$. Thus the induced boundary
metric is
$ds^2=-dt^2+\Lambda^2d\theta^2+\Lambda^2\sin\theta^2d\Omega_{d-2}^2$. At
time $t$, the extremal surface that covers the boundary region
$A=\{y| r=\Lambda, \theta\in[ 0,\Theta]\}$ is
\begin{equation}
  r_\Theta(\theta) = \frac{\Lambda\cos\Theta}{\cos\theta}, \theta\in[0,\Theta]
\end{equation}

Without loss of generality, we look at the bulk operator $\phi_x$
located on the extremal surface at point $x$, where
$\theta = \theta_x$,
$r = r_\Theta(\theta_x)= \frac{\Lambda\cos\Theta}{\cos\theta_x}$,
$\vec{\Omega}_{d-2} =\vec{0}$.  Now we find the extremal surface of
the boundary region $[0,\Theta+\Delta \Theta]$ at time $t+\Delta t$
\begin{equation}
  r_{\Theta+\Delta \Theta}(\theta)= \frac{\Lambda\cos(\Theta+\Delta \Theta)}{\cos\theta}, \theta\in[0,\Theta + \Delta\Theta]
\end{equation}

The minimum of the proper distance between the bulk point $x$ and the
extremal surface $r_{\Theta+\Delta \Theta}(\theta)$ is
\begin{equation}
  d\left(x, r_{\Theta+\Delta \Theta}(\theta) \right) = -c^2\Delta t^2 + \left(\Lambda \sin\Theta \Delta\Theta\right)^2
\end{equation}
which is independent of $\theta_x$, the angular position of $x$. 

Thus in order for the entanglement wedge of the boundary region $[0,\Theta+\Delta \Theta]$ to include the point $x$, one needs to require
\begin{equation}
  c\Delta t =\Lambda \sin\Theta \Delta\Theta
\end{equation}
so that the butterfly velocity of the boundary reconstruction of $\phi_x$ is 
\begin{equation}
  v(O_A[\phi_x];\mathcal{H}_c) = \frac{\Lambda \Delta \Theta}{\Delta t} = \frac{c}{\sin\Theta}
\end{equation}

\bibliographystyle{plain}
\bibliography{CodeSpace}

\end{document}